\newcommand\numberthis{\addtocounter{equation}{1}\tag{\theequation}}
\newtheorem*{theorem}{Theorem}
\newtheorem*{proposition}{Proposition}
\newtheorem*{definition}{Definition}
\begin{document}
\title{Machine learning phase transitions: Connections to the Fisher information}

\author{Julian Arnold}
\affiliation{Department of Physics, University of Basel, Klingelbergstrasse 82, 4056 Basel, Switzerland}
\author{Niels Lörch}
\affiliation{Department of Physics, University of Basel, Klingelbergstrasse 82, 4056 Basel, Switzerland}
\author{Flemming Holtorf}
\affiliation{Department of Chemical Engineering, Massachusetts Institute of Technology, Cambridge, MA 02139, USA}
\affiliation{CSAIL, Massachusetts Institute of Technology, Cambridge, MA 02139, USA}
\author{Frank Sch\"afer}
\affiliation{CSAIL, Massachusetts Institute of Technology, Cambridge, MA 02139, USA}
\date{\today}

\begin{abstract}
Despite the widespread use and success of machine-learning techniques for detecting phase transitions from data, their working principle and fundamental limits remain elusive. Here, we explain the inner workings and identify potential failure modes of these techniques by rooting popular machine-learning indicators of phase transitions in information-theoretic concepts. Using tools from information geometry, we prove that several machine-learning indicators of phase transitions approximate the square root of the system's (quantum) Fisher information from below -- a quantity that is known to indicate phase transitions but is often difficult to compute from data. We numerically demonstrate the quality of these bounds for phase transitions in classical and quantum systems.
\end{abstract}

\maketitle
\emph{Introduction}.---Traditionally, critical phenomena have been studied by relying on the Ginzburg-Landau-Wilson paradigm which is based on concepts such as symmetry breaking and local order parameters~\cite{goldenfeld:2018}. This framework fails to describe topological phase transitions~\cite{berezinskii:1971,kosterlitz:1973} for which there is no local order parameter. Moreover, identifying the proper order parameters of systems whose symmetry-breaking patterns are unknown is difficult. Information-theoretic quantities are particularly promising for studying phase transitions without relying on this traditional paradigm. Such quantities are universal and their computation does not require a detailed analysis of the system's physics, such as its order parameters.

In this context, the classical Fisher information (FI)~\cite{fisher:1922} and its quantum counterpart~\cite{helstrom:1967} have been extensively studied as universal indicators of phase transitions, i.e., as quantities whose maxima are signatures of critical points. The FI is a generalized susceptibility that measures the sensitivity of the system with respect to a tuning parameter. In the case of classical equilibrium systems, it measures fluctuations in the system's collective variables and is proportional to well-known response functions, such as the magnetic susceptibility or the heat capacity~\cite{prokopenko:2011}. Similarly, the quantum FI reduces to the fidelity susceptibility~\cite{you:2007,gu:2010,liu:2014}, i.e., the leading-order response of the fidelity between quantum states to parameter fluctuations. The fidelity susceptibility has been shown to detect symmetry-breaking~\cite{venuti:2007,zanardi:2007}, topological~\cite{abasto:2008,yang:2008,garnerone:2009}, and Berezinskii-Kosterlitz-Thouless-type (BKT-type)~\cite{yang:2007,wang:2010} quantum phase transitions. Moreover, the quantum FI has been used to investigate finite-temperature transitions as well as non-equilibrium phenomena, such as dissipative~\cite{banchi:2014,rota:2017,heugel:2019}, dynamical~\cite{macieszczak:2016,guan:2021}, or excited-state~\cite{zhou:2023} phase transitions.

Recently, also machine learning (ML) has emerged as an alternative paradigm for studying phase transitions~\cite{carleo:2019, carrasquilla:2020, dawid:2022}. The appeal of ML methods is akin to the one of information-theoretic approaches: they are generic and can be used to characterize a system using minimal explicit knowledge of its underlying physics. A large class of ML methods are based on solving classification or regression tasks using predictive models such as neural networks (NNs)~\cite{carrasquilla:2017,van:2017,schaefer:2019,arnold:2023,guo:2023}. By analyzing the model predictions, indicators of phase transitions are computed whose local maxima mark critical points. This framework has been employed to investigate many systems, including symmetry-breaking~\cite{carrasquilla:2017,van:2017,chng:2017,suchsland:2018,liu:2018,lee:2019,rem:2019,schaefer:2019,arnold:2021,arnold:2022,miles:2023,guo:2023,schlomer:2023,arnold:2023}, topological~\cite{carrasquilla:2017,van:2017,beach:2018,suchsland:2018,liu:2018,lee:2019,rem:2019,greplova:2020,arnold:2022,arnold:2023,guo:2023}, and non-equilibrium~\cite{van:2017,schaefer:2019,guo:2020,bohrdt:2021,zvyagintseva:2021,arnold:2022,guo2:2023} phase transitions in both the classical and quantum realm. Despite the rampant use and success of these heuristic ML-based methods, their working principle and fundamental limits have remained elusive.

\indent In this Letter, we put these methods on a firm theoretical footing by rooting them in information theory. In particular, we
prove that a large class of ML indicators for phase transitions are lower bounds to the square root of the system's FI with respect to the tuning parameter. These bounds reveal a strong link between the ML and information-theoretic paradigm for detecting phase transitions. We numerically demonstrate the quality of these underapproximations for phase transitions in classical and quantum systems. Building upon previous results on the FI in the context of statistical and quantum physics, this yields insights into the operation of ML methods for detecting phase transitions. These insights help to understand the limitations and strengths of such methods when applied to different classes of phase transitions, suggest the methods' usage as algorithms for approximating the Fisher information in more general settings, and improve their performance via modifications motivated by information geometric correspondences.

\emph{Detecting phase transitions from data}.---We study a physical system with a discrete state space $\mathcal{X}$ and, for simplicity, we assume it to be characterized by a single tunable parameter $\gamma$ along which a phase transition occurs. Generalizations of our results to higher-dimensional parameter spaces featuring multiple phase transitions are discussed in Ref.~\cite{suppl}. The probability of measuring the system in state $\bm{x} \in \mathcal{X}$ at a given tuning parameter realization $\gamma$ is denoted by $P(\bm{x}|\gamma)$. The goal is to detect the critical point based on measurements (samples) of the system state at a discrete set of realizations of the tuning parameter $\Gamma$. In the following, we discuss three distinct ML approaches to detect phase transitions from such data.

\indent The first approach is based on solving a classification task~\cite{carrasquilla:2017,arnold:2023}. To that end, let us assume we know
a set of points $\Gamma_{0}$ and $\Gamma_{1}$ lying within each of the two phases, phase 0 and phase 1, where $\forall \gamma \in \Gamma_{0}, \gamma' \in \Gamma_{1}: \gamma < \gamma'$. Then, for $y \in \left\{0, 1\right\}$ we can assign to all the samples $\bm{x}$ drawn at points in $\Gamma_{y}$ the label $y(\bm{x}) = y$. Based on this, we train a classifier $\hat{y}: \mathcal{X} \rightarrow [0,1]$ to assign the correct phase to a given sample $\bm{x}$. Intuitively, the mean prediction $\hat{y}(\gamma) = \mathbb{E}_{\bm{x} \sim P(\cdot|\gamma)}\left[ \hat{y}(\bm{x})\right]$ will change most at the critical point. This is captured by the following scalar indicator of phase transitions
\begin{equation}\label{eq:SL_ind}
    I_{1}(\gamma) = \left| \frac{\partial \hat{y}(\gamma)}{\partial \gamma} \right|,
\end{equation}
whose maximum is expected to occur at the critical point.

\indent For this first approach to be effective, one requires partial knowledge of the phase diagram, which may be unavailable. To get around this, one can employ a phase-agnostic labeling strategy~\cite{van:2017,arnold:2023}: at each sampled point $\gamma \in \Gamma$, divide the parameter space into two sets of points, $\Gamma_{0}(\gamma)$ and $\Gamma_{1}(\gamma)$, each comprised of the $l$ sampled points $\gamma'$ closest to $\gamma$ with $\gamma' \leq \gamma$ and $\gamma'>\gamma$, respectively. Each parameter point $\gamma$ defines a bipartition and, in turn, a classification task. Given a predictive model $\hat{y}: \mathcal{X} \rightarrow [0,1]$ trained to perform this task, its error rate is defined as
\begin{equation}\label{eq:LBC_err}
    p_{\rm err}(\gamma) = \frac{1}{2}\sum_{y \in \{0,1 \}} \frac{1}{|\Gamma_{y}|}\sum_{\gamma' \in \Gamma_{y}}\mathbb{E}_{\bm{x} \sim P(\cdot|\gamma')} \left[{\rm err}(\bm{x},y)\right],
\end{equation}
where ${\rm err}(\bm{x},y)$ is zero if the sample is classified correctly and one otherwise. Intuitively, $p_{\rm err}(\gamma)$ is lowest at a phase boundary where the data is partitioned according to its phase. Thus, in a second approach, critical points can be detected as local maxima in the indicator 
\begin{equation}\label{eq:LBC_ind}
    I_2(\gamma) =  1- 2 p_{\rm err}(\gamma).
\end{equation}

\indent A third approach to detect phase transitions from data is based on parameter estimation~\cite{schaefer:2019,arnold:2023}. At its core lies a predictive model $\hat{\gamma}$ that estimates the parameter $\gamma$ at which a given sample $\bm{x}$ was drawn. Intuitively, the mean predicted value of the tuning parameter
    $\hat{\gamma}(\gamma) = \mathbb{E}_{\bm{x} \sim P(\cdot|\gamma)}\left[ \hat{\gamma}(\bm{x})\right]$
is expected to be most sensitive at phase boundaries. The following indicator captures this susceptibility
\begin{equation}\label{eq:IPBM}
    I_{3}(\gamma) = \frac{\partial \hat{\gamma}(\gamma) / \partial \gamma }{\sigma(\gamma)},
\end{equation}
where $\sigma(\gamma) = \sqrt{\mathbb{E}_{\bm{x} \sim P(\cdot|\gamma)}\left[ \hat{\gamma}(\bm{x})^2\right] - \hat{\gamma}(\gamma)^2}$~\cite{std}.

We have formulated the problem of detecting a phase transition as the computation of an indicator function [Eqs.~\eqref{eq:SL_ind},~\eqref{eq:LBC_ind}, and~\eqref{eq:IPBM}]. This computation involves solving a classification or regression task, i.e., finding a suitable predictive model. This model can be constructed in a data-driven way given a set of samples $\mathcal{D}_{\gamma}$ drawn from $P(\cdot|\gamma)$ for each $\gamma \in \Gamma$. Typically, a parametric approach is chosen in which the predictive model ($\hat{y}$ or $\hat{\gamma}$) is an NN whose parameters $\bm{\theta}$ are optimized in a supervised fashion via the minimization of a loss function $\mathcal{L}(\bm{\theta})$. For approaches 1 and 2 [Eqs.~\eqref{eq:SL_ind} and~\eqref{eq:LBC_ind}] dealing with classification tasks, a typical choice is an unbiased binary cross-entropy loss
\begin{align*}\label{eq:SI_x1}
    \mathcal{L}(\bm{\theta}) = &- \frac{1}{2} \sum_{y \in \{ 0,1\}}\frac{1}{|\mathcal{D}_{y}|} \sum_{\bm{x} \in \mathcal{D}_{y}}  [y{\rm ln}\left(\hat{y}_{\bm{\theta}}(\bm{x})\right)\\
    &+ (1-y){\rm ln}\left(1-\hat{y}_{\bm{\theta}}(\bm{x})\right)]\numberthis. 
\end{align*}
where $\mathcal{D}_{y}$ is composed of all sets of samples $\mathcal{D}_{\gamma}$ with $\gamma \in \Gamma_{y}$. For the regression task in approach 3 [Eq.~\eqref{eq:IPBM}], a mean squared error loss is used
\begin{equation}\label{eq:SI_x2}
    \mathcal{L}(\bm{\theta}) = \frac{1}{|\Gamma|}\sum_{\gamma \in \Gamma}  \frac{1}{|\mathcal{D}_{\gamma}|} \sum_{\bm{x} \in \mathcal{D}_{\gamma}} \left(\gamma - \hat{\gamma}_{\bm{\theta}}(\bm{x})\right)^2.
\end{equation}
Given a predictive model, an estimate of the corresponding indicator can be obtained by replacing expected values with sample means.

\emph{Relating data-driven indicators to the Fisher information}.---In the following, we are going to establish a connection between the three aforementioned indicators of phase transitions [Eqs.~\eqref{eq:SL_ind},~\eqref{eq:LBC_ind}, and~\eqref{eq:IPBM}] and the FI
\begin{equation}\label{eq:FI}
    \mathcal{F}(\gamma) = \mathbb{E}_{\bm{x} \sim P(\cdot|\gamma)}\left[\left(\frac{\partial \log(P(\bm{x}|\gamma))}{\partial \gamma}\right)^2\right], 
\end{equation}
which quantifies the amount of information that the random variable $\bm{x}$ carries about the parameter $\gamma$ characterizing its distribution $P(\cdot|\gamma)$. The intuitive explanation for the existence of such a relationship lies in the fact that all three indicators gauge changes in the underlying probability distributions as a function of the tuning parameter~\cite{arnold:2022}.

\indent The indicator of the first approach can be written as
\begin{equation}\label{eq:SL_bound_1}
    I_1(\gamma) = \left|\mathbb{E}_{\bm{x} \sim P(\cdot|\gamma)}\left[\hat{y}(\bm{x}) \frac{\partial \log(P(\bm{x}|\gamma))}{\partial \gamma}\right] \right|
\end{equation}
using the log-derivative trick. By the Cauchy-Schwarz inequality, the indicator $I_1(\gamma)$ is maximal if and only if $\hat{y}$ is perfectly correlated with the score, i.e.,
\begin{equation}\label{eq:SL_bound_2}
    \frac{\partial \log(P(\bm{x}|\gamma))}{\partial \gamma} = \pm \sqrt{\mathcal{F}(\gamma)} \left( \frac{ \hat{y}(\bm{x}) - \mathbb{E}_{\bm{x}\sim P(\cdot|\gamma)}\left[ \hat{y}(\bm{x}) \right]}{ \sigma(\gamma) }\right), 
\end{equation}
where $\sigma(\gamma)$ is the standard deviation of $\hat{y}$ at $\gamma$ and we have used the fact that the score $\partial \log(P(\cdot|\gamma))/\partial \gamma$ has zero mean and the mean of its square corresponds to the FI [cf. Eq.~\eqref{eq:FI}]. Because samples with, e.g., a negative score, are typically predominantly found in phase 0 compared to phase 1, the correlation of $\hat{y}$ with the score (and thus $I_{1}$ itself) is expected to increase with increasing quality of the predictive model. Based on Eq.~\eqref{eq:SL_bound_2}, we have
\begin{equation}\label{eq:SL_bound}
    I_1(\gamma) \leq \sigma(\gamma) \sqrt{\mathcal{F}(\gamma)} \leq \sqrt{\mathcal{F}(\gamma)},
\end{equation}
where the second inequality follows from the fact that $|\hat{y}(\bm{x})| \leq 1 \; \forall \bm{x} \in \mathcal{X}$ by construction for any valid predictive model. Equation~\eqref{eq:SL_bound} further suggests an improvement of the first approach by modifying its indicator as $I_1(\gamma) \mapsto I_1(\gamma)/\sigma(\gamma)$, as worked out in detail in~\cite{suppl}.

\indent The second approach involves the statistical task of single-shot symmetric binary hypothesis testing: based on a single measurement outcome $\bm{x} \in \mathcal{X}$, determine which of two probability distributions $P$ and $Q$ is more likely to describe the experiment and avoid both false positives and negatives equally. The optimal error probability is given by ${p}_{\rm err}^{\rm opt} = \frac{1}{2}\left(1-{\rm TV}(P,Q)\right)$~\cite{suppl}, where ${\rm TV}$ denotes the total variation distance ${\rm TV}(P,Q) = \frac{1}{2} \sum_{\bm{x}\in \mathcal{X}} |P(\bm{x}) - Q(\bm{x}) |$.
The indicator value corresponding to a given bipartition can thus be upper-bounded as
\begin{equation}\label{eq:LBC_bound}
    I_{\rm 2}(\gamma) \leq 1-2{p}_{\rm err}^{\rm opt}(\gamma) = I_{\rm 2}^{\rm opt}(\gamma) = {\rm TV}(P_{0},P_{1}),
\end{equation}
where the total variation distance is measured between the probability distributions underlying the two partitions $P_{y} = \frac{1}{|\Gamma_{y}|} \sum_{\gamma' \in \Gamma_{y}} P(\cdot|\gamma'), \; y \in \{ 0,1\}$. This bound holds for any valid predictive model $\hat{y}: \mathcal{X} \rightarrow [0,1]$. Note that $p_{\rm err}^{\rm opt}$ is achieved by a predictive model that minimizes the loss in Eq.~\eqref{eq:SI_x1} in the infinite data limit. Thus, approach 2 relies on a variational lower bound of the total variation distance, which improves during training (i.e., as ${p}_{\rm err}$ decreases).

\indent Consider now the behavior of the total variation distance for probability distributions separated by a small distance $\delta \gamma$ in parameter space, i.e., $P(\cdot|\gamma)$ and $P(\cdot|\gamma + \delta \gamma)$,
\begin{equation}
    {\rm TV}\left(P(\cdot|\gamma), P(\cdot|\gamma + \delta \gamma)\right) =  \frac{1}{2} \sum_{\bm{x} \in \mathcal{X}} \left|\frac{\partial P(\bm{x}|\gamma)}{\partial \gamma}\right| \delta \gamma + \mathcal{O}(\delta \gamma^2).
\end{equation}
Using the Cauchy-Schwarz inequality, we have
\begin{equation}
    {\rm TV}\left(P(\cdot|\gamma), P(\cdot|\gamma + \delta \gamma)\right) \leq \frac{1}{2}\sqrt{\mathcal{F}(\gamma)}\delta\gamma + \mathcal{O}(\delta\gamma^2).
\end{equation}
Plugging into Eq.~\eqref{eq:LBC_bound}, we obtain
\begin{equation}
    I_{2}(\gamma) \leq I_{2}^{\rm opt}(\gamma) \leq \frac{1}{2}\sqrt{\mathcal{F}(\gamma)}\delta\gamma + \mathcal{O}(\delta\gamma^2),
\end{equation}
which corresponds to a scenario where $l=1$ and the probability distributions $P_{0}$ and $P_{1}$ corresponding to the two singletons $\Gamma_{0} = \{\gamma \}$ and $\Gamma_{1} = \{ \gamma + \delta \gamma \}$ are separated by a distance $\delta \gamma$. As $\delta\gamma \rightarrow 0$, the indicator of approach 2 (rescaled by $2/\delta \gamma$) serves as a lower bound to the square root of the FI.

The third approach is based on parameter estimation. The Cramér–Rao bound~\cite{cr1,cr2} is a well-known lower bound on the variance of an estimator of a deterministic (fixed, though unknown) parameter $\gamma$:
\begin{equation}
    \sigma^2(\gamma)  \geq \frac{\left(1+\partial b(\gamma)/\partial \gamma\right)^2}{\mathcal{F}(\gamma)} = \frac{(\partial \hat{\gamma}(\gamma)/\partial \gamma)^2}{\mathcal{F}(\gamma)}.
\end{equation}
Here, $b(\gamma) = \mathbb{E}_{\bm{x} \sim P(\cdot|\gamma)}[\hat{\gamma}(\bm{x})- \gamma]$ is the bias of the estimator. Rearranging the equation and plugging in the definition of the indicator of the second approach, we have
\begin{equation}\label{eq:PBM_FI}
    I_{3}(\gamma) = \frac{\partial \hat{\gamma}(\gamma) / \partial \gamma }{\sigma(\gamma)} \leq \sqrt{\mathcal{F}(\gamma)}.
\end{equation}
Interestingly, both $\partial \hat{\gamma}(\gamma) / \partial \gamma$~\cite{schaefer:2019,greplova:2020,arnold:2021,arnold:2022} and $\sigma(\gamma)$~\cite{guo:2023,guo2:2023} have been used separately as indicators of phase transitions, i.e., quantities whose local maxima and minima, respectively, indicate critical points. The connection to the FI established in Eq.~\eqref{eq:PBM_FI} further justifies using their ratio as an indicator of phase transitions. Note that minimum mean-squared error estimation~\cite{alves:2022} is a widely used approach for solving parameter estimation tasks. As such, the bound in Eq.~\eqref{eq:PBM_FI} is generally expected to improve during training using the loss function in Eq.~\eqref{eq:SI_x2}.

The relations between the three data-driven indicators and the FI derived above constitute the central result of our Letter. Next, we investigate this relationship in different physical contexts and provide numerical evidence for the quality of our bounds. As concrete examples, we consider a classical and a quantum model: the two-dimensional Ising model and the one-dimensional transverse-field Ising model.

\emph{Classical equilibrium systems}.---For a system at equilibrium with a large thermal reservoir, $P(\bm{x}| \bm{\gamma}) = e^{-\mathcal{H}(\bm{x},\bm{\gamma})}/Z(\bm{\gamma})$,
where $\bm{x} \in \mathcal{X}$ denotes a configuration of the system, $Z(\bm{\gamma})$ is the partition function, and $\bm{\gamma}=(\gamma_{1},\gamma_{2},\dots,\gamma_{d})$ are tunable parameters, such as the temperature or magnetic field strength. Typically, the dimensionless Hamiltonian $\mathcal{H} = H/k_{\rm B}T$ takes the form $\mathcal{H} = \sum_{i=1}^{d} \gamma_{i} X_{i}(\bm{x})$,
where $X_{i}(\bm{x})$ is a collective variable coupled to the tuning parameter $\gamma_{i}$. The FI $\mathcal{F}_{i}$ associated with the parameter $\gamma_{i}$ can be shown to measure changes in these collective variables, $\mathcal{F}_{i}(\bm{\gamma}) = -\frac{\partial \langle X_{i} \rangle}{\partial \gamma_{i}}$~\cite{prokopenko:2011}. Moreover, since $\frac{\partial A}{\partial \gamma_{i}} = k_{B}T \langle X_{i} \rangle$, where $A$ is the Helmholtz free energy, we have $\mathcal{F}_{i}(\bm{\gamma}) = - \beta \frac{\partial^2 A}{\partial \gamma_{i}^2}$, where $\beta = 1/k_{\rm B}T$. Because the FI is related to second derivatives of the free energy, it is sensitive to first- and second-order divergences. In the case of thermal transitions where the tuning parameter is a function of $T$, for example, the FI is proportional to the heat capacity $\mathcal{F} \propto C$.

\begin{figure}[tbh!]
	\centering
		\includegraphics[width=0.99\linewidth]{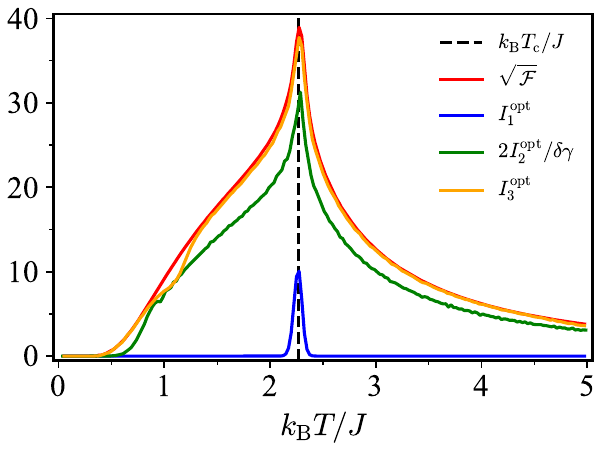}
		\caption{Results of the three data-driven approaches for detecting phase transitions [Eqs.~\eqref{eq:SL_ind},~\eqref{eq:LBC_ind}, and~\eqref{eq:IPBM}] applied to the square-lattice ferromagnetic Ising model ($L = 60$). Here, we consider (approximate) Bayes-optimal predictive models. The critical point $k_{\rm B}T_{\rm c}/J = 2/ \ln(1+\sqrt{2})$ is highlighted by a black-dashed line. All shown quantities are lower bounds to the square root of the system's FI. The set $\Gamma$ is composed of a uniform grid with 200 points and grid spacing $\delta \gamma = 0.025$. Each dataset $D_{\gamma}$ consists of $10^6$ spin configurations. For approach 1 [Eq.~\eqref{eq:SL_ind}], we choose $\Gamma_{0} = \{0.025,\dots,2.25\}$ and $\Gamma_{1} = \{2.275,\dots,5\}$, i.e., we choose the two regions in parameter space to coincide with the two phases. For approach 2, we choose $l=1$.}
		\label{fig:1}
\end{figure}

\indent As a concrete example, we consider the thermal phase transition in the $L \times L$ square-lattice classical Ising model described by the Hamiltonian $H(\bm{\sigma}) = - J \sum_{\langle ij\rangle} \sigma_{i}\sigma_{j}$, where the sum runs over all nearest-neighboring sites (with periodic boundary conditions), $J$ is the interaction strength $(J>0)$, and $\sigma_{i} \in \{+1,-1 \}$. Here, we choose $\gamma = k_{\rm B}T/J$ as a tuning parameter. The energy is the relevant collective variable $X(\bm{\sigma}) = H(\bm{\sigma})/J$ and the system's FI corresponds to $CJ^2/k_{\rm B}^3T^2$ where $C$ is the heat capacity. We draw spin configurations from $\{ P(\cdot|\gamma)\}_{\gamma \in \Gamma}$ via Markov chain Monte Carlo~\cite{suppl}. Based on the resulting datasets $\{ \mathcal{D}_{\gamma}\}_{\gamma \in \Gamma}$, we construct (approximate) \textit{Bayes-optimal} predictive models using nonparametric generative models obtained via histogram binning of the sufficient statistic $X$~\cite{arnold:2022,arnold:2023}. These models constitute global minima of the relevant loss functions [Eqs.~\eqref{eq:SI_x1} and~\eqref{eq:SI_x2}]. Figure~\ref{fig:1} shows the indicators $I^{\rm opt}$ corresponding to these Bayes-optimal predictive models for the Ising model. They are good underapproximators of the square root of the system's FI showing similar functional behavior. In particular, their peak positions are in agreement. Note that the indicator of approach 1 (including its peak position) depends heavily on the choice of $\Gamma_{0}$ and $\Gamma_{1}$, i.e., on prior knowledge of the location of the phase transition~\cite{suppl}.

\emph{Quantum systems}.---In quantum physics, measurements are described by a positive operator-valued measure (POVM). The probability of obtaining the measurement outcome $\bm{x} \in \mathcal{X}$ associated with the POVM element $\Pi_{\bm{x}}$ is given by $P(\bm{x}|\gamma) = {\rm tr}\left(\Pi_{\bm{x}} \rho(\gamma)\right)$. The \emph{quantum} FI corresponds to the classical FI maximized over all possible measurements, $\mathcal{F}^{Q}(\gamma) = \mathcal{F}^{Q}(\rho(\gamma)) = \max_{\{\Pi_{\bm{x}}\}_{\bm{x}}} \mathcal{F}\left(P(\cdot|\gamma)\right)$~\cite{braunstein:1994}. Moreover, expanding the fidelity $ F(\rho,\sigma) = {\rm tr}(\sqrt
{\sqrt{\sigma}\rho\sqrt{\sigma}})$ between infinitesimally close states~\cite{jozsa:1994}, we have $F(\rho(\gamma),\rho(\gamma+\delta \gamma)) = 1 - \delta \gamma^2 \mathcal{F}^{Q}(\rho(\gamma))/8 + \mathcal{O}(\delta \gamma^3) = 1 - \delta \gamma^2 \chi_{\mathcal{F}}(\rho(\gamma))/2 + \mathcal{O}(\delta \gamma^3) $, where $\chi_{\mathcal{F}} = \mathcal{F}^{Q}(\rho(\gamma))/4$ is the fidelity susceptibility~\cite{gu:2010,liu:2014}.

As an example, we consider the transverse-field Ising model~\cite{sachdev:2011} on a (periodic) one-dimensional chain of length $L$ whose Hamiltonian is given by $H = -J \sum_{\langle ij\rangle} \sigma_{i}^{z}\sigma_{j}^{z} - h \sum_{i} \sigma_{i}^{x}$,
where $J>0$ is the nearest-neighbor interaction strength, $h$ is the external field strength, and $\{\sigma_{i}^{x}, \sigma_{i}^{y}, \sigma_{i}^{z}\}$ are the Pauli operators acting on the spin at site $i$. This model undergoes a quantum phase transition at zero temperature from a ferromagnetically ordered phase at $\gamma =  h/J<1$ to a disordered phase. We perform exact diagonalization and consider projective measurements in the $x$-basis. The optimal indicators of all three approaches yield non-trivial bounds on the square root of the system's classical and quantum FI and their peak positions agree, see Fig.~\ref{fig:2}. Applying classical data-driven methods to quantum systems requires the choice of a POVM. Many previous works have successfully detected phase transitions using simple projective measurements in a single basis~\cite{greplova:2020,miles:2021,bohrdt:2021,miles2:2021,maskara:2022}. Our findings highlight that a good choice of measurement is one that results in a high classical FI, i.e., one for which the latter is close to the quantum FI.\\

\begin{figure}[bth!]
	\centering
		\includegraphics[width=0.99\linewidth]{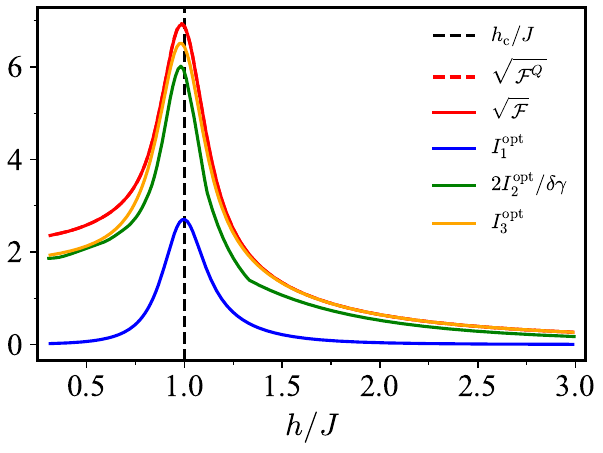}
		\caption{Results of the three data-driven approaches for detecting phase transitions [Eqs.~\eqref{eq:SL_ind},~\eqref{eq:LBC_ind}, and~\eqref{eq:IPBM}] applied to the one-dimensional transverse-field Ising model ($L = 20$). Here, we consider Bayes-optimal predictive models constructed from the exact probability distributions underlying the measurement statistics~\cite{arnold:2022,arnold:2023}, which globally minimize the corresponding loss functions. The critical point $h_{c}/J=1$ is highlighted by a black-dashed line~\cite{sachdev:2011}. All shown quantities are lower bounds to the square root of the system's classical and quantum FI (in the figure these two curves overlap). The set $\Gamma$ is composed of a uniform grid with 201 points and grid spacing $\delta \gamma = 0.0135$. For approach 1 [Eq.~\eqref{eq:SL_ind}], we choose $\Gamma_{0} = \{0.3\}$ and $\Gamma_{1} = \{3\}$, i.e., we choose the two sets to be composed of single points at the edges of the sampled region in parameter space. For approach 2, we choose $l=1$.}
		\label{fig:2}
\end{figure}

\emph{Conclusion}.---We have unveiled a fundamental connection between the (so far disparate) information-theoretic and machine-learning paradigms of studying critical phenomena: The indicators of phase transitions of several popular ML methods approximate the square root of the system's FI from below. We numerically demonstrated the quality of these underapproximations for phase transitions in classical equilibrium systems and quantum ground states. Our result sheds light on the fundamental working principle and limitations of the ML methods considered in this work:

We find that they can be viewed as data-driven approaches for constructing approximations of the FI that remain operationally useful for identifying phase transitions. The exact FI is difficult to access if only samples of the system are available, as computing the FI generally requires knowledge of the probabilistic model underlying the measurement statistics of the system [see Eq.~\eqref{eq:FI}].

As an example of limitations, a variety of previous numerical studies~\cite{beach:2018,suchsland:2018,arnold:2022,guo:2023} have found that NN-based methods struggle to detect thermal transitions of the BKT-type in classical equilibrium systems. In particular, the NN-based indicators have been observed to show a peak at the same position as the heat capacity, away from the critical point. The relation between NN-based indicators and the FI we have uncovered offers a natural explanation for this observation, given that the FI reduces to the heat capacity in such a case while the BKT transition is of infinite order. More generally, the FI is related to second derivatives of the free energy and is thus only sensitive to first- and second-order divergences. Our work suggests that the same holds for the ML methods we considered.

\emph{Outlook}.---Interestingly, data-driven schemes for estimating the FI based on approximating different statistical divergences (akin to the variational lower bound of the total variation distance utilized in approach 2) have recently been proposed~\cite{berisha:2014,duy:2022}. Similar ML methods based on variational representations have also been utilized to estimate other classical~\cite{nguyen:2007,belghazi:2018} as well as quantum information-theoretic quantities~\cite{cerezo:2020,tan:2021,beckey:2022,shin:2023,goldfeld:2023}. In light of our results, such approaches may give rise to a whole new set of methods to detect phase transitions from data that have so far gone unnoticed. 

Here, we have focused on three particular ML approaches for detecting phase transitions. It will be interesting to see what other methods are related to the FI. For example, the generative adversarial network fidelity defined in Ref.~\cite{singh:2021} can also be shown to approximate the FI (see Ref.~\cite{suppl} for details). We anticipate that information theory will be instrumental in understanding, categorizing, and improving the growing number of ML methods for detecting phase transitions, with our results forming the basis for such efforts.

\indent We thank Alexander Gresch, Lennart Bittel, Martin Kliesch, and Christoph Bruder for stimulating discussions. J.A. acknowledges financial support from the Swiss National Science Foundation individual grant (grant no. 200020 200481). Computation time at sciCORE (scicore.unibas.ch) scientific computing center at the University of Basel is gratefully acknowledged. This material is based upon work supported by the National Science Foundation under grant no. OAC-1835443, grant no. OAC-2103804, and grant no. DMS-2325184.

\let\oldaddcontentsline\addcontentsline
\renewcommand{\addcontentsline}[3]{}
\bibliography{refs.bib}
\let\addcontentsline\oldaddcontentsline

\newpage
\pagebreak

\setcounter{equation}{0}
\setcounter{figure}{0}
\setcounter{table}{0}
\setcounter{page}{1}
\makeatletter
\renewcommand{\thesection}{S\arabic{section}}
\renewcommand{\theequation}{S\arabic{equation}}
\renewcommand{\thefigure}{S\arabic{figure}}

\onecolumngrid

\begin{center}
\textbf{\large Supplemental Material for ``Machine learning phase transitions: Connections to the Fisher information''}
\end{center}

\tableofcontents

\section{Statistical background}
In this section, we will provide a short overview of the relevant statistical background underlying the findings of our work, touching upon concepts such as $f$-divergences, information geometry, hypothesis testing, and parameter estimation. Readers familiar with these topics may skip this section. For a more extended overview, see, for example, Refs.~\cite{casella:2002,bickel:2015,jarzyna:2020}.\\

The space composed of all valid probability distributions on a given probability space is referred to as a \emph{statistical manifold} $\mathcal{M}$. The field of information geometry is concerned with the geometry of this manifold. It can give useful insights for dealing with statistical inference tasks such as parameter estimation or hypothesis testing: many machine learning (ML) methods for detecting phase transitions make use of such tasks, including the three methods discussed in the main text [see Eqs. (1), (3), and (4) in the main text].\\

\indent Let us start by reviewing the notion of \emph{statistical distances}. They measure the distance between statistical objects, such as probability distributions. A statistical distance between two elements of the statistical manifold is some non-negative function $D: \mathcal{M} \times \mathcal{M} \rightarrow \mathbb{R}_{\geq 0}$. A statistical distance is a proper distance or \emph{metric} if, in addition, it satisfies $i)$ \emph{symmetry}: $D[p,q] = D[q,p]$, $ii)$ \emph{identity of indiscernibles}: $D[p,q] = 0 \iff p=q$, and $iii)$  \emph{triangle inequality}: $D[p,r] + D[r,q] \geq D[p,q]$, for any valid probability distributions $p,q,r \in \mathcal{M}$ defined over the state space $\mathcal{X}$. For simplicity, in what follows we assume the state space $\mathcal{X}$ to be discrete and countable. It turns out that many statistical distances of interest do not satisfy all these criteria. An important class of statistical distances are the so-called $f$-divergences~\cite{liese:2006}. 
\begin{definition}[$f$-divergence]
    Given a convex function $f: \mathbb{R}_{\geq 0} \rightarrow \mathbb{R}$ with $f(1)=0$, the corresponding $f$-divergence is a statistical distance defined as
\begin{equation}
    D_{f}[p,q] = \sum_{\bm{x} \in \mathcal{X}} q(\bm{x}) f\left(\frac{p(\bm{x})}{q(\bm{x})}\right).
\end{equation}
\end{definition}
In general, an $f$-divergence does not constitute a proper metric.\footnote{The non-negativity of $f$-divergences follows from their convexity via Jensen's inequality. Moreover, if $f(x)$ is strictly convex at $x=1$, the corresponding $f$-divergence can be shown to satisfy the identity of indiscernible, which justifies referring to $D_{f}$ as a divergence.}\\

\indent The \emph{total variation} (TV) distance is an $f$-divergence which will become particularly useful for us later on. It is defined as
\begin{equation}\label{eq:TV_distance}
    {\rm TV}[p,q] = \frac{1}{2} \sum_{\bm{x}\in \mathcal{X}} |p(\bm{x}) - q(\bm{x})|.
\end{equation}
The TV distance is the $f$-divergence with $f(x) = \frac{1}{2}|1-x|$. In contrast to other $f$-divergences, the function $f$ of the TV distance is not differentiable at $1$. Other important examples of $f$-divergences include the \emph{Kullback-Leibler} (KL) divergence
\begin{equation}
    {\rm KL}[p,q] = \sum_{\bm{x} \in \mathcal{X}} p(\bm{x})\ln\left( \frac{p(\bm{x})}{q(\bm{x})}\right),
\end{equation}
with $f(x) = x\ln(x)$ and the \emph{Jensen-Shannon} (JS) divergence
\begin{equation}\label{eq:JSD}
    {\rm JS}[p,q] = \frac{1}{2}{\rm KL}\left[p,\frac{p+q}{2}\right] + \frac{1}{2}{\rm KL}\left[q,\frac{p+q}{2}\right],
\end{equation}
with $f(x) = \frac{1}{2}\left(x\ln(\frac{2x}{1+x}) +  \ln(\frac{2}{1+x})\right)$. Note that the generating function $f$ of a given $f$-divergence is not uniquely defined, but only up to an affine term. That is, $D_{f} = D_{g}$ if $f(x) = g(x) + c(x-1)$ for some $c \in \mathbb{R}$.\\

While $f$-divergences are not proper metrics, they satisfy other crucial properties. For example, a good statistical distance should capture the information loss associated with data processing. As such, it should fulfill the so-called data-processing inequality.

\begin{proposition}[Data-processing inequality]
    Consider a mapping from $\mathcal{X}$ to an alternative space $\mathcal{Y}$, $S:\mathcal{X} \rightarrow \mathcal{Y}$, such that $p(\bm{y}) = \sum_{\bm{x} \in \mathcal{X}} W(\bm{y}|\bm{x})p(\bm{x})$ with $W$ being a left-stochastic transition matrix, i.e., a matrix with non-negative entries and columns summing up to one. Then, $D_{f}[p(\bm{x}),q(\bm{x})] \geq D_{f}[p(\bm{y}),q(\bm{y})]$ for any $f$-divergence $D_{f}$.
\end{proposition}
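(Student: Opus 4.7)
The plan is to prove the data-processing inequality by exploiting the convexity of the generating function $f$ via Jensen's inequality, combined with the two defining properties of the transition matrix $W$ (non-negative entries, and columns summing to one). At a high level, the mechanism is that averaging preserves convexity: pushing $p$ and $q$ forward through the stochastic kernel mixes together the pointwise likelihood ratios $p(\bm{x})/q(\bm{x})$, and $f$ applied to a mixture is no larger than the mixture of $f$ values.

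First I would start from the right-hand side $D_{f}[p(\bm{y}),q(\bm{y})] = \sum_{\bm{y}\in\mathcal{Y}} q(\bm{y}) f\!\left(p(\bm{y})/q(\bm{y})\right)$ and rewrite the likelihood ratio on $\mathcal{Y}$ as a convex combination of likelihood ratios on $\mathcal{X}$. Concretely, using $p(\bm{y})=\sum_{\bm{x}} W(\bm{y}|\bm{x})p(\bm{x})$ and the analogous identity for $q$, I would introduce the weights
\begin{equation}
\lambda^{\bm{y}}_{\bm{x}} = \frac{W(\bm{y}|\bm{x}) q(\bm{x})}{q(\bm{y})},
\end{equation}
and verify that they are non-negative and satisfy $\sum_{\bm{x}} \lambda^{\bm{y}}_{\bm{x}} = 1$ for each $\bm{y}$. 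This gives the representation $p(\bm{y})/q(\bm{y}) = \sum_{\bm{x}} \lambda^{\bm{y}}_{\bm{x}}\, p(\bm{x})/q(\bm{x})$.

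The next step is to apply Jensen's inequality to the convex function $f$, yielding $f(p(\bm{y})/q(\bm{y})) \le \sum_{\bm{x}} \lambda^{\bm{y}}_{\bm{x}} f(p(\bm{x})/q(\bm{x}))$. Substituting back, multiplying by $q(\bm{y})$, and swapping the order of summation, the $q(\bm{y})$ in the denominator of $\lambda^{\bm{y}}_{\bm{x}}$ cancels, leaving $\sum_{\bm{x},\bm{y}} W(\bm{y}|\bm{x}) q(\bm{x}) f(p(\bm{x})/q(\bm{x}))$. Finally, I would invoke the left-stochasticity of $W$, namely $\sum_{\bm{y}} W(\bm{y}|\bm{x}) = 1$, to collapse the $\bm{y}$-sum and recover exactly $\sum_{\bm{x}} q(\bm{x}) f(p(\bm{x})/q(\bm{x})) = D_{f}[p(\bm{x}),q(\bm{x})]$.

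The main obstacle is not the core convexity argument but the careful handling of degenerate cases in which some denominators vanish, i.e.\ $\bm{x}$ or $\bm{y}$ with $q(\cdot)=0$. I would deal with this by adopting the standard conventions $0\cdot f(0/0) = 0$ and $0\cdot f(a/0) = a\, \lim_{t\to\infty} f(t)/t$, and by restricting all sums to the support of $q$ (where $q(\bm{y})=0$ forces $p(\bm{y})=0$ as well whenever $W$-pushforwards of $p$ are absolutely continuous with respect to those of $q$, which holds automatically when the same is true on $\mathcal{X}$). With these conventions, the Jensen step and the summation swap remain valid, and the inequality is established in full generality.
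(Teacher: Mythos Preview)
Your proof is correct and is essentially the same argument as the paper's: both apply Jensen's inequality to the convex function $f$ with respect to the weights $\lambda^{\bm{y}}_{\bm{x}} = W(\bm{y}|\bm{x})q(\bm{x})/q(\bm{y})$ (which the paper writes as the posterior $q(\bm{x}|\bm{y})$) and then collapse the $\bm{y}$-sum using $\sum_{\bm{y}} W(\bm{y}|\bm{x})=1$. Your presentation is slightly more direct, working immediately with the convex-combination representation of $p(\bm{y})/q(\bm{y})$ rather than routing through the joint distribution $p(\bm{x},\bm{y})$, and you additionally address the degenerate $q=0$ cases that the paper leaves implicit.
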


\begin{proof}
\begin{align*}
    D_{f}[p(\bm{x}),q(\bm{x})] &= \sum_{\bm{x} \in \mathcal{X}} q(\bm{x})f\left(\frac{p(\bm{x})}{q(\bm{x})}\right) = \sum_{\bm{x}\in \mathcal{X},\bm{y}\in \mathcal{Y}} q(\bm{x})W(\bm{y}|\bm{x}) f\left(\frac{p(\bm{x})W(\bm{y}|\bm{x})}{q(\bm{x})W(\bm{y}|\bm{x})}\right)
    = \sum_{\bm{x},\bm{y}} q(\bm{x},\bm{y}) f\left(\frac{p(\bm{x},\bm{y})}{q(\bm{x},\bm{y})}\right)\\
    &= D_{f}[p(\bm{x},\bm{y}),q(\bm{x},\bm{y})] = \sum_{\bm{y}} q(\bm{y}) \sum_{\bm{x}} q(\bm{x}|\bm{y}) f\left(\frac{p(\bm{y})p(\bm{x}|\bm{y})}{q(\bm{y})q(\bm{x}|\bm{y})}\right)
    \leq \sum_{\bm{y}} q(\bm{y}) f\left(\sum_{\bm{x}} q(\bm{x}|\bm{y}) \frac{p(\bm{y})p(\bm{x}|\bm{y})}{q(\bm{y})q(\bm{x}|\bm{y})}\right)
\end{align*}
where we have used Jensen's inequality in the last step. Finally noting that
\begin{align*}
    \sum_{\bm{y}} q(\bm{y}) f\left(\sum_{\bm{x}} q(\bm{x}|\bm{y}) \frac{p(\bm{y})p(\bm{x}|\bm{y})}{q(\bm{y})q(\bm{x}|\bm{y})}\right) =  \sum_{\bm{y}} q(\bm{y}) f\left(\sum_{\bm{x}} p(\bm{x}|\bm{y}) \frac{p(\bm{y})}{q(\bm{y})}\right) = \sum_{\bm{y}} q(\bm{y}) f\left( \frac{p(\bm{y})}{q(\bm{y})}\right) = D_{f}[p(\bm{y}),q(\bm{y})]
\end{align*}
completes the proof. 
\end{proof}

The intuition is that processing $\bm{x}$ (via a local physical operation described by a Markov process) can only make it more difficult to distinguish two distributions. Note that statistical distances that satisfy the data processing inequality are also called \emph{monotonic} (under stochastic maps).\\

\indent A good statistical distance should also be invariant under mappings between sample spaces that preserve all ``relevant information'' about $\bm{x}$. To this end, let us endow the statistical manifold $\mathcal{M}$ with a coordinate system by parametrizing all probability distributions in the manifold $p \mapsto p_{\bm{\gamma}}, \bm{\gamma} \in \mathbb{R}^{d}$, where $d = {\mathrm{dim}}\; \mathcal{M}$. In this case, a statistic that encodes all relevant information about the value of the parameter $\bm{\gamma}$ is called sufficient. The statistic is sufficient in the sense that there does not exist any other statistic that could be calculated from the samples $\bm{x}$ that would provide additional information regarding the value of the parameter.
\begin{definition}[Sufficient statistic]
    Given a mapping $S: \mathcal{X} \rightarrow \mathcal{Y}$, the statistic $S(\bm{x})$ is sufficient for $\bm{\gamma}$ if and only if $\forall \bm{\gamma}, \bm{y}$ we have that $p_{\bm{\gamma}}(\bm{x}|\bm{y} = S(\bm{x}))$ is independent of $\bm{\gamma}$.
\end{definition}

Having specified what is meant by relevant information, we can verify that $f$-divergences are indeed invariant under mappings between sample spaces that leave this information intact. In particular, $f$-divergences can be shown to be invariant under mappings $S: \mathcal{X} \rightarrow \mathcal{Y}$ where the statistic $S(\bm{x})$ is sufficient for $\bm{\gamma}$.

\begin{proposition}[Invariance under sufficient statistic]
    Consider a mapping $S: \mathcal{X} \rightarrow \mathcal{Y}$ where the statistic $S(\bm{x})$ is sufficient for $\bm{\gamma}$. Then, $D_{f}[p_{\bm{\gamma}_{1}}(\bm{x}),p_{\bm{\gamma}_{2}}(\bm{x})] = D_{f}[p_{\bm{\gamma}_{1}}(\bm{y}),p_{\bm{\gamma}_{2}}(\bm{y})]$ for any choice of $\bm{\gamma}_{1}$, $\bm{\gamma}_{2}$ and $f$-divergence.
\end{proposition}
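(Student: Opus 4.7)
The plan is to leverage the defining property of sufficiency: since $p_{\bm{\gamma}}(\bm{x}|\bm{y}=S(\bm{x}))$ does not depend on $\bm{\gamma}$, each joint distribution factors as $p_{\bm{\gamma}}(\bm{x}) = g(\bm{x})\, p_{\bm{\gamma}}(S(\bm{x}))$ with a $\bm{\gamma}$-independent factor $g(\bm{x}) := p(\bm{x}|S(\bm{x}))$. First I would substitute this factorization into the definition of the $f$-divergence on $\mathcal{X}$. Inside the argument of $f$, the factor $g(\bm{x})$ appears in both numerator and denominator of the likelihood ratio and therefore cancels, leaving only the ratio $p_{\bm{\gamma}_1}(S(\bm{x}))/p_{\bm{\gamma}_2}(S(\bm{x}))$.

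Next I would reorganize the sum over $\bm{x}$ by grouping the terms according to the fibers $\{\bm{x} : S(\bm{x}) = \bm{y}\}$ of $S$. The argument of $f$ then depends only on $\bm{y}$, and the prefactor splits into $p_{\bm{\gamma}_2}(\bm{y})$ times $\sum_{\bm{x}:S(\bm{x})=\bm{y}} g(\bm{x})$. Recognizing the latter as summing the $\bm{\gamma}$-independent conditional $p(\bm{x}|\bm{y})$ over all $\bm{x}$ in its support collapses it to unity, yielding exactly $D_f[p_{\bm{\gamma}_1}(\bm{y}), p_{\bm{\gamma}_2}(\bm{y})]$ and proving the equality.

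A conceptually cleaner alternative — which I would mention as a remark — combines the previously established data-processing inequality with its converse. The forward direction gives $D_f[p_{\bm{\gamma}_1}(\bm{x}),p_{\bm{\gamma}_2}(\bm{x})] \geq D_f[p_{\bm{\gamma}_1}(\bm{y}),p_{\bm{\gamma}_2}(\bm{y})]$ immediately. For the opposite inequality, sufficiency provides a Markov kernel $W(\bm{x}|\bm{y}) = p(\bm{x}|\bm{y})$ that reconstructs $p_{\bm{\gamma}}(\bm{x})$ from $p_{\bm{\gamma}}(\bm{y})$ without any reference to $\bm{\gamma}$, so another application of the data-processing inequality in this reverse direction produces the matching lower bound, and equality follows.

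The main obstacle I anticipate is purely bookkeeping: since $\bm{y}=S(\bm{x})$ is deterministic, the conditional $p(\bm{x}|\bm{y})$ is supported only on the fiber $S^{-1}(\bm{y})$, and one must avoid spurious $0/0$ expressions when writing the likelihood ratio. This is handled by restricting sums to the support of $p_{\bm{\gamma}_2}$ and by adopting the standard $f$-divergence conventions for zeros, after which the factorization argument goes through uniformly for every admissible generator $f$ and thus for every $f$-divergence simultaneously.
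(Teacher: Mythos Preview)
Your proposal is correct, but both of your routes differ from the paper's. The paper gives a one-line argument that looks back at the proof of the data-processing inequality: the only inequality there is Jensen's, and it becomes an equality precisely when $p_{\bm{\gamma}_1}(\bm{x}|\bm{y})/p_{\bm{\gamma}_2}(\bm{x}|\bm{y})=1$; sufficiency of $S$ guarantees exactly this, so the DPI collapses to an identity in a single stroke. Your primary argument instead bypasses the DPI entirely and works directly from the factorization $p_{\bm{\gamma}}(\bm{x})=p(\bm{x}|\bm{y})\,p_{\bm{\gamma}}(\bm{y})$, cancelling the $\bm{\gamma}$-independent factor in the likelihood ratio and collapsing the fiber sums; this is more elementary and fully self-contained, at the cost of a few more lines of bookkeeping. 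Your alternative---applying the DPI twice, forward via $S$ and backward via the $\bm{\gamma}$-free reconstruction kernel $W(\bm{x}|\bm{y})=p(\bm{x}|\bm{y})$---is closest in spirit to the paper but still distinct: rather than identifying the equality case inside one application of Jensen, you sandwich the two divergences between each other. That version has the advantage of extending verbatim to any monotone divergence, not just $f$-divergences, whereas the paper's equality-tracking and your direct computation both rely on the explicit $f$-divergence form.
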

\begin{proof}
    Any $f$-divergence satisfies the data-processing inequality. During data processing, equality is achieved if $\frac{p(\bm{x}|\bm{y})}{q(\bm{x}|\bm{y})} = 1$. Choosing $p = p_{\bm{\gamma}_{1}}$ and $q = p_{\bm{\gamma}_{2}}$, this is satisfied given that $S(\bm{x})$ is a sufficient statistic.
\end{proof}

The Fisher-Neyman factorization theorem provides another convenient characterization of a sufficient statistic.

\begin{theorem}[Fisher-Neyman factorization theorem]
    Given a mapping $S: \mathcal{X} \rightarrow \mathcal{Y}$, the statistic $S(\bm{x})$ is sufficient for $\bm{\gamma}$ if and only if non-negative functions $h$ and $g$ can be found such that $p_{\bm{\gamma}}(\bm{x}) = h(\bm{x})g_{\bm{\gamma}}(S(\bm{x})) \ \forall \bm{\gamma}, \bm{x}$.
\end{theorem}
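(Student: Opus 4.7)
The plan is to prove both implications directly from the definition of sufficiency, leveraging the fact that $\bm{y}=S(\bm{x})$ is determined by $\bm{x}$, so the joint distribution of $(\bm{x},\bm{y})$ degenerates into $p_{\bm\gamma}(\bm{x})$ whenever $\bm{y}=S(\bm{x})$ (and vanishes otherwise). This reduces everything to manipulating the identity $p_{\bm\gamma}(\bm{x}) = p_{\bm\gamma}(\bm{x}\mid S(\bm{x}))\, p_{\bm\gamma}(S(\bm{x}))$.

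For the ``$\Leftarrow$'' direction, I assume the factorization $p_{\bm\gamma}(\bm{x}) = h(\bm{x})\,g_{\bm\gamma}(S(\bm{x}))$. Marginalizing gives
\begin{equation*}
p_{\bm\gamma}(\bm{y}) \;=\; \sum_{\bm{x}':\,S(\bm{x}')=\bm{y}} p_{\bm\gamma}(\bm{x}') \;=\; g_{\bm\gamma}(\bm{y}) \sum_{\bm{x}':\,S(\bm{x}')=\bm{y}} h(\bm{x}').
\end{equation*}
Then for any $\bm{x}$ with $S(\bm{x}) = \bm{y}$, the ratio $p_{\bm\gamma}(\bm{x}\mid\bm{y}) = p_{\bm\gamma}(\bm{x})/p_{\bm\gamma}(\bm{y})$ simplifies to $h(\bm{x})/\sum_{\bm{x}':\,S(\bm{x}')=\bm{y}} h(\bm{x}')$, which is manifestly $\bm\gamma$-independent, so $S$ is sufficient.

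For the ``$\Rightarrow$'' direction, I assume $S(\bm{x})$ is sufficient, meaning $p_{\bm\gamma}(\bm{x}\mid\bm{y}=S(\bm{x}))$ does not depend on $\bm\gamma$. Define $h(\bm{x}) := p_{\bm\gamma}(\bm{x}\mid S(\bm{x}))$ (well-defined by the assumption, independent of $\bm\gamma$) and $g_{\bm\gamma}(\bm{y}) := p_{\bm\gamma}(\bm{y})$. Using that $\bm{y} = S(\bm{x})$ is a deterministic function of $\bm{x}$, I have $p_{\bm\gamma}(\bm{x}) = p_{\bm\gamma}(\bm{x}, S(\bm{x})) = p_{\bm\gamma}(\bm{x}\mid S(\bm{x}))\, p_{\bm\gamma}(S(\bm{x})) = h(\bm{x})\, g_{\bm\gamma}(S(\bm{x}))$, establishing the factorization.

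I do not anticipate a serious obstacle: the discrete, countable setting sidesteps any measure-theoretic subtleties (such as the careful use of Radon-Nikodym derivatives and regular conditional distributions required in the continuous case). The only point requiring some care is handling zero-probability states, which can be dealt with by restricting attention to the support of $p_{\bm\gamma}$ (or, equivalently, adopting the convention $0/0 = 0$ when defining conditional distributions on atoms with zero mass), and by noting that the factorization is required only on the support.
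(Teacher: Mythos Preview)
Your proof is correct and follows essentially the same approach as the paper's: both directions hinge on the identity $p_{\bm{\gamma}}(\bm{x}) = p_{\bm{\gamma}}(\bm{x}\mid S(\bm{x}))\,p_{\bm{\gamma}}(S(\bm{x}))$, with the same choices $h(\bm{x}) = p(\bm{x}\mid S(\bm{x}))$, $g_{\bm{\gamma}}(\bm{y}) = p_{\bm{\gamma}}(\bm{y})$ for the forward direction and the same marginalization-and-cancellation of $g_{\bm{\gamma}}$ for the converse. Your additional remark about zero-probability states is a nice touch the paper omits, but otherwise the arguments are identical.
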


\begin{proof}
    \noindent ($\implies$): If $\bm{y} = S(\bm{x})$, we have $p_{\bm{\gamma}}(\bm{x},\bm{y}) = p_{\bm{\gamma}}(\bm{x}) p_{\bm{\gamma}}(\bm{y}|\bm{x}) = p_{\bm{\gamma}}(\bm{x})$ given that $p_{\bm{\gamma}}(\bm{y}|\bm{x}) = 1$. Thus, $p_{\bm{\gamma}}(\bm{x})=p_{\bm{\gamma}}(\bm{x},\bm{y})=p_{\bm{\gamma}}(\bm{x}|\bm{y})p_{\bm{\gamma}}(\bm{y}) = p(\bm{x}|\bm{y}) p_{\bm{\gamma}}(\bm{y}),$ where the last equality follows by invoking that $\bm{y} = S(\bm{x})$ is a sufficient statistic. Therefore, with $h(\bm{x}) \equiv p(\bm{x}|\bm{y})$ and $g_{\bm{\gamma}}(\bm{y}) \equiv p_{\bm{\gamma}}(\bm{y})$, we have $p_{\bm{\gamma}}(\bm{x})= h(\bm{x}) g_{\bm{\gamma}}(S(\bm{x}))$.\\
    
    \noindent ($\impliedby$): $p_{\bm{\gamma}}(\bm{y})=\sum_{\bm{x} \in \mathcal{X};\; S(\bm{x})=\bm{y}}  p_{\bm{\gamma}}(\bm{x},\bm{y}) = \sum_{\bm{x} \in \mathcal{X};\; S(\bm{x})=\bm{y}}  p_{\bm{\gamma}}(\bm{x})$. Inserting $p_{\bm{\gamma}}(\bm{x})=h(\bm{x})g_{\bm{\gamma}}(\bm{y})$, we have $p_{\bm{\gamma}}(\bm{y})=\sum_{\bm{x} \in \mathcal{X};\; S(\bm{x})=\bm{y}}  h(\bm{x})g_{\bm{\gamma}}(\bm{y}) = \left(\sum_{\bm{x} \in \mathcal{X};\; S(\bm{x})=\bm{y}}  h(\bm{x})\right)g_{\bm{\gamma}}(\bm{y})$. The quantity $p_{\bm{\gamma}}(\bm{x}|\bm{y}) = \frac{p_{\bm{\gamma}}(\bm{x},\bm{y})}{p_{\bm{\gamma}}(\bm{y})} =\frac{p_{\bm{\gamma}}(\bm{x})}{p_{\bm{\gamma}}(\bm{y})}  $ is thus independent of $\bm{\gamma}$, $p_{\bm{\gamma}}(\bm{x}|\bm{y})  =  \frac{h(\bm{x})g_{\bm{\gamma}}(\bm{y})}{ \left(\sum_{\bm{x} \in \mathcal{X};\; S(\bm{x})=\bm{y}}  h(\bm{x})\right)g_{\bm{\gamma}}(\bm{y}) } = \frac{h(\bm{x})}{ \left(\sum_{\bm{x} \in \mathcal{X};\; S(\bm{x})=\bm{y}}  h(\bm{x})\right)} $. This proves that $S(\bm{x})$ is a sufficient statistic.
\end{proof}

Crucially, the factorized form guaranteed by the Fisher-Neyman factorization theorem, i.e., the fact that the dependence of $\bm{x}$ on $\bm{\gamma}$ only enters through the sufficient statistic $S(\bm{x})$, implies that Bayes-optimal estimates of parameters as well as strategies in hypothesis testing only depend on the sufficient statistic. That is, the optimal indicators of phase transitions of the three methods discussed in the main text can be computed solely from the sufficient statistic as opposed to measurements of the full state space~\cite{arnold:2023}. Similarly, the Fisher information with respect to $\bm{\gamma}$ can be shown to be invariant under such a mapping.
\subsection{Information geometry}\label{sec:inf_theory}
Considering any statistical distance $D[p,q]$ smooth in $p,q \in \mathcal{M}$. Then, we have
\begin{equation}\label{eq:2nd_order}
    D[p_{\bm{\gamma}},p_{\bm{\gamma}+\bm{\delta \gamma}}] = \frac{1}{2}\bm{\delta \gamma}^T H_{D}(\bm{\gamma})\bm{\delta \gamma} + \mathcal{O}(\delta \gamma^3),
\end{equation}
where we use the fact that $D[p,p]=0$ and the first-order term vanishes because $D[p,p]=0$ is a minimum (any statistical distance is non-negative). Here, $H_{D}(\bm{\gamma})=H_{D}(p_{\bm{\gamma}})$ is the Hessian matrix with entries
\begin{equation}
    [H_{D}(\bm{\gamma})]_{i,j} = \left.\frac{\partial^2}{\partial \phi_{i} \partial \phi_{j}} D[p_{\bm{\gamma}},p_{\bm{\phi}}]\right|_{\bm{\phi}=\bm{\gamma}}.
\end{equation}
The components of the Hessian matrix measure how susceptible the probability distribution $p_{\bm{\gamma}}$ is to small changes in the underlying coordinates $\bm{\gamma}$, where the resulting deviations are measured by the statistical distance $D$. One can show that any Hessian induced by a monotonic statistical distance (such as $f$-divergences) must also be monotonic. 
\begin{proposition}[Monotonicity of Hessian]
    Consider a stochastic map $\mathcal{S}: \mathcal{M} \rightarrow \mathcal{M}$ such that $p' = \mathcal{S}p$, where $p'(\bm{y}) = \sum_{\bm{x} \in \mathcal{X}} W(\bm{y}|\bm{x}) p(\bm{x})$ with $W$ being a left-stochastic transition matrix. The Hessian of any sufficiently smooth statistical distance $D$ monotonic under such maps must also be monotonic.
\end{proposition}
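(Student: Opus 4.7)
The plan is to lift the assumed monotonicity of $D$ itself to its second-order expansion by Taylor expanding both sides of the data-processing inequality at a common base point $p_{\bm{\gamma}}$ and extracting the leading nontrivial order in $\bm{\delta\gamma}$. First, I would apply the second-order expansion~(\ref{eq:2nd_order}) to the original family $\{p_{\bm{\gamma}}\}$, obtaining
\begin{equation*}
    D[p_{\bm{\gamma}}, p_{\bm{\gamma}+\bm{\delta\gamma}}] = \frac{1}{2}\bm{\delta\gamma}^{T} H_{D}(p_{\bm{\gamma}})\bm{\delta\gamma} + \mathcal{O}(\delta\gamma^{3}).
\end{equation*}
Because $\mathcal{S}$ is linear in the probability distribution, the image family $\{\mathcal{S}p_{\bm{\gamma}}\}$ inherits its smoothness in $\bm{\gamma}$ from $\{p_{\bm{\gamma}}\}$ and is parametrized by the very same coordinates. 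Applying the expansion once more, now at the base point $\mathcal{S}p_{\bm{\gamma}}$, yields
\begin{equation*}
    D[\mathcal{S}p_{\bm{\gamma}}, \mathcal{S}p_{\bm{\gamma}+\bm{\delta\gamma}}] = \frac{1}{2}\bm{\delta\gamma}^{T} H_{D}(\mathcal{S}p_{\bm{\gamma}})\bm{\delta\gamma} + \mathcal{O}(\delta\gamma^{3}).
\end{equation*}

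Next, I would invoke the assumed monotonicity of $D$ under $\mathcal{S}$, $D[\mathcal{S}p_{\bm{\gamma}}, \mathcal{S}p_{\bm{\gamma}+\bm{\delta\gamma}}] \leq D[p_{\bm{\gamma}}, p_{\bm{\gamma}+\bm{\delta\gamma}}]$, and substitute the two expansions into this inequality. Writing $\bm{\delta\gamma} = t\bm{v}$ for a fixed direction $\bm{v}$, dividing by $t^{2}$, and sending $t \to 0$, one obtains
\begin{equation*}
    \bm{v}^{T} H_{D}(\mathcal{S}p_{\bm{\gamma}})\bm{v} \;\leq\; \bm{v}^{T} H_{D}(p_{\bm{\gamma}})\bm{v}.
\end{equation*}
Since $\bm{v}$ is arbitrary, this is precisely the monotonicity of the Hessian as a quadratic form under $\mathcal{S}$ (equivalently, $H_{D}(p_{\bm{\gamma}}) - H_{D}(\mathcal{S}p_{\bm{\gamma}}) \succeq 0$ in the Loewner order), which is the statement to be proved.

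\textbf{Main obstacle.} The argument is a clean two-line application of Taylor's theorem combined with the monotonicity hypothesis, so I do not anticipate a genuinely hard step. The one subtlety worth flagging is that the two Hessians must be paired with the \emph{same} coordinate perturbation $\bm{\delta\gamma}$; this is legitimate because the linear map $\mathcal{S}$ does not re-coordinatize the image family, so source and image are naturally charted by a common $\bm{\gamma}$. Beyond that, one only needs the $\mathcal{O}(\delta\gamma^{3})$ remainders to be uniform in direction as $t \to 0$, which is guaranteed by the ``sufficiently smooth'' hypothesis on $D$ built into Eq.~(\ref{eq:2nd_order}).
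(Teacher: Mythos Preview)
Your proposal is correct and follows essentially the same route as the paper: apply the data-processing inequality $D[\mathcal{S}p_{\bm{\gamma}},\mathcal{S}p_{\bm{\gamma}+\bm{\delta\gamma}}]\leq D[p_{\bm{\gamma}},p_{\bm{\gamma}+\bm{\delta\gamma}}]$, expand both sides to second order via Eq.~(\ref{eq:2nd_order}), and pass to the limit $\bm{\delta\gamma}\to 0$ to extract the Hessian inequality. Your version is simply a more explicit rendering of the paper's three-sentence argument, and the subtleties you flag (common parametrization, uniform remainders) are exactly the ones the paper leaves implicit.
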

\begin{proof}
Because $D$ is monotonic, it satisfies the data-processing inequality. Thus, we have $D[p_{\bm{\gamma}}',p_{\bm{\gamma} + \bm{\delta \gamma}}'] \leq D[p_{\bm{\gamma}},p_{\bm{\gamma} + \bm{\delta \gamma}}]$. Expanding the statistical distance to second order according to Eq.~\eqref{eq:2nd_order}, we have that $H_{D}(p_{\bm{\gamma}}) \geq H_{D}(\mathcal{S}p_{\bm{\gamma}})$ up to $\mathcal{O}(\bm{\delta \gamma}^3)$. Letting  $\delta \gamma \rightarrow 0$ concludes the proof.
\end{proof}
Moreover, if $D$ is an $f$-divergence with $f'(1)= 0$, the Hessian can be shown to be proportional to the (classical) Fisher information matrix $\mathcal{F}(\bm{\gamma}) = \mathcal{F}(p_{\bm{\gamma}})$, where
\begin{equation}
     \mathcal{F}_{i,j}(\bm{\gamma}) = \mathcal{F}_{i,j}(p_{\bm{\gamma}}) = \mathbb{E}_{\bm{x} \sim p_{\bm{\gamma}}}\left[\left(\frac{\partial \log(p_{\bm{\gamma}})}{\partial \gamma_{i}}\right) \left(\frac{\partial \log(p_{\bm{\gamma}})}{\partial \gamma_{j}}\right)\right].
\end{equation}
\begin{proposition}[Relation between Hessian and Fisher information matrix]
    The Hessian matrix of any $f$-divergence $D_{f}$ with $f$ being twice-differentiable is given by $H_{D_{f}}(\bm{\gamma}) = f''(1) \mathcal{F}(\bm{\gamma})$.
\end{proposition}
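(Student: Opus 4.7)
The plan is to compute the Hessian directly from the definition by differentiating $D_f[p_{\bm{\gamma}}, p_{\bm{\phi}}] = \sum_{\bm{x}} p_{\bm{\phi}}(\bm{x}) f(r(\bm{x}))$ twice with respect to $\bm{\phi}$ and evaluating at $\bm{\phi} = \bm{\gamma}$, where I introduce the shorthand $r(\bm{x}) = p_{\bm{\gamma}}(\bm{x}) / p_{\bm{\phi}}(\bm{x})$. A short calculation using $\partial r/\partial \phi_i = -(r/p_{\bm{\phi}})\, \partial p_{\bm{\phi}}/\partial \phi_i$ and the product rule collapses the first derivative to the neat form
\begin{equation*}
\frac{\partial}{\partial \phi_i} D_f[p_{\bm{\gamma}}, p_{\bm{\phi}}] = \sum_{\bm{x}} \frac{\partial p_{\bm{\phi}}(\bm{x})}{\partial \phi_i}\bigl[ f(r) - r f'(r) \bigr],
\end{equation*}
where the $r f'(r)$ piece comes precisely from the chain-rule contribution inside $p_{\bm{\phi}} f(r)$. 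This identity is the key structural observation; everything that follows hinges on it.

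Next I would differentiate once more. Two contributions arise: a term containing $\partial^2 p_{\bm{\phi}} / \partial \phi_i \partial \phi_j$ multiplied by the bracket $[f(r) - r f'(r)]$, and a term where $\partial/\partial \phi_j$ hits the bracket itself. The latter simplifies dramatically, since $(d/dr)[f(r) - r f'(r)] = -r f''(r)$, so the $f'$ contributions cancel and only the $f''$ part survives. Evaluating at $\bm{\phi} = \bm{\gamma}$ sets $r \equiv 1$, so $f(r) \to f(1) = 0$ and the first contribution becomes $-f'(1)\sum_{\bm{x}} \partial^2 p_{\bm{\gamma}}/\partial \gamma_i \partial \gamma_j$, which vanishes by exchanging the sum with the derivatives and invoking normalization $\sum_{\bm{x}} p_{\bm{\gamma}}(\bm{x}) = 1$.

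What remains is $f''(1) \sum_{\bm{x}} (1/p_{\bm{\gamma}})(\partial_{\gamma_i} p_{\bm{\gamma}})(\partial_{\gamma_j} p_{\bm{\gamma}})$, which the log-derivative identity rewrites as $f''(1)\, \mathbb{E}_{\bm{x} \sim p_{\bm{\gamma}}}[(\partial_{\gamma_i} \log p_{\bm{\gamma}})(\partial_{\gamma_j} \log p_{\bm{\gamma}})] = f''(1)\, \mathcal{F}_{ij}(\bm{\gamma})$, completing the proof. There is no real conceptual obstacle: the whole argument is bookkeeping. The only step requiring care is recognizing the cancellation that eliminates the $f'(1)$ contribution — first algebraically, via the identity $(d/dr)[f(r) - r f'(r)] = -r f''(r)$, and then a second time via the normalization-induced vanishing of $\sum_{\bm{x}} \partial^2_{ij} p_{\bm{\gamma}}$. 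Once those two cancellations are spotted, the Fisher information drops out immediately.
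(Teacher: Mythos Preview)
Your proof is correct and follows essentially the same approach as the paper: direct differentiation of $D_f[p_{\bm{\gamma}},p_{\bm{\phi}}]$ in $\bm{\phi}$, evaluation at $\bm{\phi}=\bm{\gamma}$, and identification of the surviving term with $f''(1)\,\mathcal{F}_{ij}(\bm{\gamma})$. The only minor difference is in disposing of the $f'(1)$ contribution: the paper invokes the gauge freedom $f \mapsto f - f'(1)(x-1)$ to set $f'(1)=0$ without loss of generality, whereas you keep $f'(1)$ general and kill the offending term via $\sum_{\bm{x}} \partial^2_{ij} p_{\bm{\gamma}} = \partial^2_{ij}\sum_{\bm{x}} p_{\bm{\gamma}} = 0$. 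Your route is arguably cleaner since it avoids the detour through the affine ambiguity of $f$, but the underlying computation is identical.
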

\begin{proof}
$[H_{D_{f}}(\bm{\gamma})]_{i,j} = \left.\frac{\partial^2}{\partial \phi_{i} \partial \phi_{j}} D_{f}[p_{\bm{\gamma}},p_{\bm{\phi}}]\right|_{\bm{\phi}=\bm{\gamma}} = \sum_{\bm{x}\in\mathcal{X}} \frac{1}{p_{\bm{\gamma}}(\bm{x})} f''(1) \frac{\partial p_{\bm{\gamma}}(\bm{x}) }{\partial \gamma_{i}} \frac{\partial p_{\bm{\gamma}}(\bm{x}) }{\partial \gamma_{j}} = f''(1)\mathcal{F}_{i,j}(\bm{\gamma})$
where we used the fact that $f(1) = f'(1)=0$. If $f'(1) \neq 0$, we may use our freedom in the choice of generating function to ensure otherwise. That is, we replace $f \mapsto g$, where $g(x) = f(x) - f'(1)(x-1)$ retaining $D_{f} = D_{g}$.
\end{proof}
Note that the scalar version of the Fisher information matrix is referred to as Fisher information.\\

Combining the above findings, namely that $f$-divergences are monotonic, the Hessian corresponding to any monotonic statistical distance must also be monotonic, and the Hessian of any $f$-divergence is proportional to the Fisher information matrix, we have that the Fisher information matrix is also monotonic, i.e.,
\begin{equation}
    \mathcal{F}(\mathcal{S}p_{\bm{\gamma}}) \leq \mathcal{F}(p_{\bm{\gamma}}).
\end{equation}
This also implies that the Fisher information remains invariant under the mapping $S: \mathcal{X} \rightarrow \mathcal{Y}$ if $\bm{y} = S(\bm{x})$ is a sufficient statistic.\\

Chentsov's theorem~\cite{chentsov:1978,fujiwara:2022} extends the above argument to all monotonic metrics. It states that all Riemannian metrics defined on a given statistical manifold that are monotonic correspond to the Fisher metric (i.e., the Fisher information matrix) up to a multiplicative constant. Note that $H_{D_{f}}$ can be interpreted as a Riemannian metric on the statistical manifold $\mathcal{M}$. Thus, Chentov's theorem covers the above discussion as a special case. Monotonicity is an important property for any sensible statistical distance. As such, Chentsov's theorem effectively singles out the Fisher information matrix as the only natural metric on the statistical manifold, justifying why many properties of statistical models should be describable in terms of the Fisher information matrix~\cite{dowty:2018}. In the context of detecting phase transitions from data, the above discussion provides some intuition on why many approaches that rely on measuring changes in the underlying probability distributions are ultimately related to the Fisher information. In particular, while one may \textit{a priori} choose distinct $f$-divergences for gauging such changes, as long as the distributions are sufficiently close in parameter space, any such choice reduces to the Fisher information.

\subsection{Hypothesis testing}\label{sec:HT}
Next, we consider the statistical inference task of hypothesis testing. Given the outcomes $\{\bm{x}_{1},\bm{x}_{2},\dots,\bm{x}_{n} \}$ of $n$ independent rounds of an experiment, one needs to decide which distribution from a set of $m$ possible choices $\{p_{1},p_{2},\dots,p_{m}\}$ is most likely to describe the experiment. In the following, we are concerned with binary hypothesis testing, i.e., with distinguishing between two distinct distributions ($m=2$) given a single measurement outcome ($n=1$). Moreover, we treat false positives and false negatives equally. Thus, the task corresponds to \emph{single-shot symmetric binary hypothesis testing} in which one is interested in minimizing the average error probability 
\begin{equation}\label{eq:HT_error}
    p_{\rm err} = \frac{1}{2}P(q|p) + \frac{1}{2}P(p|q), 
\end{equation}
where $P(q|p)$ corresponds to the probability of selecting the probability distribution $q$ while the data actually came from $p$ [and vice versa for $P(p|q)$]. The goal is to find a decision function $\hat{y}: \mathcal{X} \rightarrow \{0,1 \}$, where $\hat{y}=0$ corresponds to the conclusion that $p$ is the correct underlying distribution and $\hat{y}(\bm{x})=1$ corresponds to the conclusion that $q$ is the correct underlying distribution. The optimal inference strategy can be found by minimizing the average error probability $p_{\rm err}$ in Eq.~\eqref{eq:HT_error}. In the scenario we consider, the optimal strategy, also called \emph{Neyman-Pearson} strategy, corresponds to guessing $p$ if $p(\bm{x}) \geq q(\bm{x})$ and guessing $q$ otherwise. Under this strategy, the error probabilities can be expressed as
\begin{equation}\label{eq:err1}
    P(p|q) = \sum_{\substack{\bm{x}\in \mathcal{X}\\
                   p(\bm{x}) \geq q(\bm{x})}}
                q(\bm{x}) \qquad {\rm and} \qquad P(q|p) = \sum_{\substack{\bm{x}\in \mathcal{X}\\
                   q(\bm{x}) > p(\bm{x})}}
                p(\bm{x}).
\end{equation}
\begin{proposition}[Error corresponding to Neyman-Pearson strategy]
    When making predictions according to the Neyman-Pearson strategy, the (optimal) average error probability is equal to \begin{equation}\label{eq:opt_err}
    p_{\rm err}^{\rm opt} = \frac{1}{2}\left(1-\frac{1}{2}\sum_{\bm{x} \in \mathcal{X}} |p(\bm{x}) - q(\bm{x})| \right).
\end{equation}
\end{proposition}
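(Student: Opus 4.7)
The plan is to directly manipulate the expressions in Eq.~(S\ref{eq:err1}) for the error probabilities under the Neyman-Pearson strategy and compare the result to the TV-distance form in Eq.~(\ref{eq:TV_distance}). I would set up the disjoint partition $\mathcal{X} = A \sqcup B$ with $A = \{\bm{x}\in\mathcal{X} : p(\bm{x}) \geq q(\bm{x})\}$ and $B = \{\bm{x}\in\mathcal{X} : q(\bm{x}) > p(\bm{x})\}$. Plugging Eq.~(S\ref{eq:err1}) into Eq.~(\ref{eq:HT_error}), the quantity to compute is
\begin{equation*}
    2 p_{\rm err}^{\rm opt} = \sum_{\bm{x}\in A} q(\bm{x}) + \sum_{\bm{x}\in B} p(\bm{x}).
\end{equation*}

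The main idea is to use the normalization $\sum_{\bm{x}\in\mathcal{X}} p(\bm{x}) = \sum_{\bm{x}\in\mathcal{X}} q(\bm{x}) = 1$ to convert one of the two sums into a deficit. Concretely, I would write $\sum_{\bm{x}\in A} q(\bm{x}) = 1 - \sum_{\bm{x}\in B} q(\bm{x})$, yielding
\begin{equation*}
    2 p_{\rm err}^{\rm opt} = 1 - \sum_{\bm{x}\in B} \bigl(q(\bm{x}) - p(\bm{x})\bigr).
\end{equation*}
The remaining step is to recognize this residual sum as half of the TV distance. Since $|p(\bm{x}) - q(\bm{x})| = p(\bm{x}) - q(\bm{x})$ on $A$ and $q(\bm{x}) - p(\bm{x})$ on $B$, normalization forces $\sum_{A}(p-q) = \sum_{B}(q-p)$; denoting the common value by $S/2$, we get $\sum_{\bm{x}\in\mathcal{X}} |p(\bm{x}) - q(\bm{x})| = S$. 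Thus $\sum_{B}(q-p) = \tfrac{1}{2}\sum_{\bm{x}\in\mathcal{X}} |p(\bm{x}) - q(\bm{x})|$, and substituting above yields Eq.~(\ref{eq:opt_err}) upon dividing by $2$.

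For completeness, I would also verify that the Neyman-Pearson strategy is indeed optimal, since the proposition is phrased as giving the optimal error probability. For an arbitrary decision rule $\hat{y}: \mathcal{X} \to \{0,1\}$ with decision regions $D_0, D_1$, a short calculation analogous to the one above gives $2 p_{\rm err} = 1 + \sum_{\bm{x}\in D_1}\bigl(p(\bm{x}) - q(\bm{x})\bigr)$, which is minimized precisely by assigning $\bm{x}$ to $D_1$ whenever $q(\bm{x}) > p(\bm{x})$ — exactly the Neyman-Pearson rule. This sub-argument is the only part with any subtlety; the rest is bookkeeping. No genuine obstacle is expected, as the proof reduces to one application of normalization plus one case analysis of the sign of $p-q$.
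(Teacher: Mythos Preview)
Your proposal is correct and follows essentially the same approach as the paper: partition $\mathcal{X}$ by the sign of $p-q$ and use normalization of $p$ and $q$ to relate the Neyman-Pearson error sums to $\sum_{\bm{x}}|p(\bm{x})-q(\bm{x})|$. The paper runs the equality chain in the opposite direction (starting from the TV expression and arriving at the error sums), but the ingredients are identical; your additional verification of optimality over all decision rules is a nice extra that the paper simply asserts before stating the proposition.
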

    
\begin{proof}
\begin{align*}
    p_{\rm err}^{\rm opt} = \frac{1}{2}\left(1-\frac{1}{2}\sum_{\bm{x}\in \mathcal{X}} |p(\bm{x})-q(\bm{x})|\right) &= \frac{1}{2}-\frac{1}{4}\sum_{\substack{\bm{x}\in \mathcal{X}\\
                   p(\bm{x}) \geq q(\bm{x})}} \left(p(\bm{x})-q(\bm{x})\right) -\frac{1}{4}\sum_{\substack{\bm{x}\in \mathcal{X}\\
                   q(\bm{x}) > p(\bm{x})}} \left(q(\bm{x})-p(\bm{x})\right)\\
    &= \frac{1}{2}-\frac{1}{4}\left[\sum_{\substack{\bm{x}\in \mathcal{X}\\
                   p(\bm{x}) \geq q(\bm{x})}} p(\bm{x}) +\sum_{\substack{\bm{x}\in \mathcal{X}\\
                   q(\bm{x}) > p(\bm{x})}} q(\bm{x}) - \sum_{\substack{\bm{x}\in \mathcal{X}\\
                   p(\bm{x}) \geq q(\bm{x})}} q(\bm{x})   - \sum_{\substack{\bm{x}\in \mathcal{X}\\
                   q(\bm{x}) > p(\bm{x})}} p(\bm{x})\right].
\end{align*}
Using $1 = \sum_{\bm{x}\in \mathcal{X}} p(\bm{x}) = \sum_{\substack{\bm{x}\in \mathcal{X}\\
                   p(\bm{x}) \geq q(\bm{x})}} p(\bm{x}) + \sum_{\substack{\bm{x}\in \mathcal{X}\\
                   q(\bm{x}) > p(\bm{x})}} p(\bm{x})$ (and similarly for $q$) to rewrite the first two terms, we have
                   \begin{equation*}
                       p_{\rm err}^{\rm opt} = \frac{1}{2}\sum_{\substack{\bm{x}\in \mathcal{X}\\
                   p(\bm{x}) \geq q(\bm{x})}} q(\bm{x}) +\frac{1}{2}\sum_{\substack{\bm{x}\in \mathcal{X}\\
                   q(\bm{x}) > p(\bm{x})}}p(\bm{x}),
                   \end{equation*}
corresponding to the error rate under the optimal strategy obtained by plugging in Eq.~\eqref{eq:err1} into Eq.~\eqref{eq:HT_error}.
\end{proof}
Using the definition of the TV distance in Eq.~\eqref{eq:TV_distance}, the minimal average error probability [Eq.~\eqref{eq:opt_err}] can be expressed as 
\begin{equation}\label{eq:optimal_error}
    p_{\rm err}^{\rm opt} = \frac{1}{2}(1-{\rm TV}[p,q]),
\end{equation}
giving the TV distance operational meaning. In Ref.~\cite{arnold:2023} (Sec. S2 in corresponding Supplemental Material), it was explicitly shown that the optimal predictive model in approach 2 for detecting phase transitions from data discussed in the main text makes predictions according to the Neyman-Pearson strategy and achieves the optimal error rate in Eq.~\eqref{eq:optimal_error}. 

\subsection{Parameter estimation}
In the statistical inference task of parameter estimation, one tries to estimate an unknown set of parameters $\bm{\gamma}$ based on independent measurements $\bm{x}$ distributed according to the probability distribution $p_{\bm{\gamma}}(\bm{x})$. This is done by calculating an estimator of $\bm{\gamma}$ denoted $\hat{\bm{\gamma}}(\bm{x})$ given a measurement $\bm{x}$. The bias of an estimator is defined as
\begin{equation}
    \bm{b}(\bm{\gamma}) = \mathbb{E}_{\bm{x} \sim p_{\bm{\gamma}}}[\hat{\bm{\gamma}}(\bm{x}) - \bm{\gamma}] = \bm{\psi}(\bm{\gamma}) - \bm{\gamma}.
\end{equation}
An estimator is called \emph{unbiased} if $\bm{b}(\bm{\gamma}) = 0\; \forall \bm{\gamma}$. The \emph{Cramér-Rao bound} (also known as information inequality) states that
\begin{equation}\label{cr:bound}
    {\rm Cov}(\bm{\gamma}) \geq \left(\frac{\partial \bm{\psi}}{\partial \bm{\gamma}}\right)^{T} \left(\mathcal{F}(\bm{\gamma})\right)^{-1} \left(\frac{\partial \bm{\psi}}{\partial \bm{\gamma}}\right),
\end{equation}
where ${\rm Cov}(\bm{\gamma})$ is the covariance matrix of the estimator evaluated at $\bm{\gamma}$ and $\left(\frac{\partial \bm{\psi}}{\partial \bm{\gamma}}\right)$ denotes the Jacobian matrix of $\bm{\psi}$ with matrix elements $\left(\frac{\partial \bm{\psi}}{\partial \bm{\gamma}}\right)_{i,j} = \partial \psi_{i}(\bm{\gamma})/\partial \gamma_{j}$. Here, we have assumed that $\mathcal{F}(\bm{\gamma})$ is nonsingular. A proof can be found in Ref.~\cite{bickel:2015}, pp. 179–188.\\

For an unbiased estimator (i.e.,  $\bm{\psi}(\bm{\gamma}) = \bm{\gamma}$), the Cramér-Rao bound reduces to   
\begin{equation}\label{cr:bound2}
    {\rm Cov}(\bm{\gamma}) \geq \left(\mathcal{F}(\bm{\gamma})\right)^{-1} ,
\end{equation}
In the special scalar case, the Cramér-Rao bound reads
\begin{equation}\label{cr:bound}
    \sigma^2(\gamma) \geq \frac{[1+b'(\gamma)]^2}{\mathcal{F}(\gamma)} = \frac{(\psi'(\gamma))^2}{\mathcal{F}(\gamma)},
\end{equation}
where $b'(\gamma) = \frac{\partial b}{\partial \gamma}$ and $\mathcal{F}$ the Fisher information associated with $\gamma$.\\

\section{Generalization to higher-dimensional parameter spaces}
In this section, we generalize the results of the main text to parameter spaces of arbitrary dimension which possibly feature multiple phase transitions. To this end, we assume our physical system to be characterized by a vector $\bm{\gamma}$ of tuning parameters. Measurements (samples) of the system state are obtained at a discrete set of realizations of the tuning parameter $\Gamma$. Generalizations of the three approaches for detecting phase transitions from data discussed in the main text to higher-dimensional parameter spaces have been proposed in Ref.~\cite{arnold:2023}. In what follows, we will show that the corresponding indicators are lower bounds to the trace of the Fisher information matrix.\\

For approach 1, we assume knowledge of the number of distinct phases, $K$, and their rough location in parameter space. The corresponding points in parameter space are assigned distinct labels $y \in \mathcal{Y} = \{1,\dots,K \}$. The indicator is then given by 
\begin{equation}\label{eq:ISL}
    I_{1}(\bm{\gamma}) = \frac{1}{K} \sum_{y \in \mathcal{Y}} \sqrt{ \sum_{i=1}^d\left(\frac{\partial Q(y|\bm{\gamma})}{\partial \gamma_{i}}\right)^2 },
\end{equation}
where $Q(y|\bm{\gamma}) = \mathbb{E}_{\bm{x} \sim P(\cdot|\bm{\gamma})}\left[ Q(y|\bm{x}) \right]$ with $Q(y|\bm{x})$ denoting the (estimated) probability that sample $\bm{x}$ carries label $y$. Following the derivation of the main text for the conditional probability of each label $Q(y|\bm{\gamma})$ separately, we have
\begin{equation}
    I_{1}(\bm{\gamma}) \leq \frac{1}{K} \sum_{y \in \mathcal{Y}} \sigma_{y}(\bm{\gamma}) \sqrt{ \sum_{i=1}^{d} \mathcal{F}_{i,i}(\bm{\gamma}) } = \sqrt{ {\rm tr}\left(\mathcal{F}(\bm{\gamma})\right) },
\end{equation}
where $\mathcal{F}_{i,i}(\bm{\gamma})$ is the $i$th diagonal element of the Fisher information matrix and $\sigma_{y}(\bm{\gamma})$ is the standard deviation of the $Q(y|\bm{x})$ at $\bm{\gamma}$.
\\

\indent Approach 2 can be generalized by dividing the parameter space along each direction at each sampled point $\bm{\gamma}= \left(\gamma_{1}, \gamma_{2},\dots,\gamma_{d}\right) \in \Gamma$. For a given direction $1 \leq i \leq d$, this yields two sets, $\Gamma_{1}^{(i)}(\bm{\gamma})$ and $\Gamma_{2}^{(i)}(\bm{\gamma})$, each comprised of the $l$ points closest to $\bm{\gamma}$ in part 1 and 2 of the split parameter space, respectively. Based on these sets, an indicator component is computed according to Eqs. (2) and (3) in the main text, $I_{2}^{(i)}(\bm{\gamma}) = 1 - 2 p_{\rm err}^{(i)}(\bm{\gamma})$. The overall indicator is
\begin{equation}\label{eq:ILBC}
    I_{2}(\bm{\gamma}) = \sqrt{\sum_{i=1}^{d} \left(I_{2}^{(i)}(\bm{\gamma})\right)^2 }.
\end{equation}
Following the proof in the main text, for $l=1$ each indicator component is bounded as
\begin{equation}
    I_{2}^{(i)}(\bm{\gamma}) \leq I_{2}^{(i),\rm opt}(\bm{\gamma}) \leq \frac{1}{2}\sqrt{\mathcal{F}_{i,i}(\bm{\gamma})}\delta\gamma_{i} + \mathcal{O}(\delta\gamma_{i}^2).
\end{equation}
Thus,
\begin{equation}\label{eq:ILBC}
    I_{2}(\bm{\gamma}) \leq I_{2}^{\rm opt}(\bm{\gamma}) \leq \sqrt{\sum_{i=1}^{d} \left(I_{2}^{(i),{\rm opt}}(\bm{\gamma})\right)^2 }\leq \frac{1}{2}\sqrt{ \sum_{i=1}^{d}{\mathcal{F}_{i,i}(\bm{\gamma})}\delta\gamma_{i}^2} + \sum_{i=1}^{d} \mathcal{O}(\delta\gamma_{i}^2).
\end{equation}
Assuming $\delta\gamma_{i} = \delta\gamma \;\forall i$, we have
\begin{equation}\label{eq:ILBC}
    I_{2}(\bm{\gamma}) \leq I_{2}^{\rm opt}(\bm{\gamma}) \leq \frac{1}{2}\delta\gamma\sqrt{ {\rm tr}\left({\mathcal{F}(\bm{\gamma})}\right)} +  \mathcal{O}(\delta\gamma^2).
\end{equation}
Thus, in the limit $\delta \gamma \rightarrow 0$, the quantity $2I_{2}(\bm{\gamma})/\delta \gamma$ serves as a lower bound to the trace of the Fisher information matrix.\\

\indent The indicator of approach 3 can be generalized as
\begin{equation}\label{eq:I3}
    I_{3}(\bm{\gamma}) = \sqrt{\sum_{i=1}^{d} \left(\frac{\partial \hat{\gamma}_{i}(\bm{\gamma}) / \partial \gamma_{i} }{\sigma_{i}(\bm{\gamma})}\right)^2},
\end{equation}
where $\hat{\bm{\gamma}}(\bm{\gamma}) = \mathbb{E}_{\bm{x} \sim P(\cdot|\bm{\gamma})}\left[ \hat{\bm{\gamma}}(\bm{x})\right]$ and $\bm{\sigma}(\bm{\gamma}) = \sqrt{\mathbb{E}_{\bm{x} \sim P(\cdot|\bm{\gamma})}\left[ \hat{\bm{\gamma}}(\bm{x})^2\right] - \left(\mathbb{E}_{\bm{x} \sim P(\cdot|\bm{\gamma})}\left[ \hat{\bm{\gamma}}(\bm{x})\right]\right)^2}$ with $\hat{\bm{\gamma}}(\bm{x})$ being an estimator for $\bm{\gamma}$ (operations are carried out element-wise). Applying the scalar Cramér-Rao bound [Eq.~\eqref{cr:bound}] to each component of the sum in Eq.~\eqref{eq:I3}, we obtain the bound 
\begin{equation}
    I_{3}(\bm{\gamma}) \leq \sqrt{{\rm tr}\left(\mathcal{F}(\bm{\gamma})\right)}.
\end{equation}

\section{Additional results and insights}
\subsection{Approach 1: Choice of training regions and improved bounds}
The indicator of approach 1 (including its peak position) depends heavily on the choice of $\Gamma_{0}$ and $\Gamma_{1}$, i.e., on prior knowledge of the location of the phase transition. Figure~\ref{fig:SI_1} shows optimal indicators $I_{1}^{\rm opt}$ for various choices of the training regions in the case of the two-dimensional classical Ising model. Note that while the peak position shifts depending on the choice of training regions, the largest indicator value (i.e., the best bound to the Fisher information) is achieved when the parameter space is split at the critical point. This suggests that, in practice, the indicator should be computed for various choices of the sets $\Gamma_{0}$ and $\Gamma_{1}$. By tracing the maximum indicator value attained at each point in parameter space, an overall improved lower bound to the square root of the Fisher information can be obtained, see envelopes given by dashed and dotted lines in Fig.~\ref{fig:SI_1}.\\

\indent In the main text, we pointed out that approach 1 could be improved by modifying the indicator $I_1(\gamma) \mapsto \tilde{I}_1(\gamma) = I_1(\gamma)/\sigma(\gamma)$. Figure~\ref{fig:SI_1} also displays the rescaled optimal indicators $\tilde{I}_{1}^{\rm opt}$ for various choices of the training regions, highlighting that this small modification indeed yields improved bounds and, in fact, accurate approximations to the Fisher information near the bipartition boundary.\\

\begin{figure}[tbh!]
	\centering
		\includegraphics[width=0.7\linewidth]{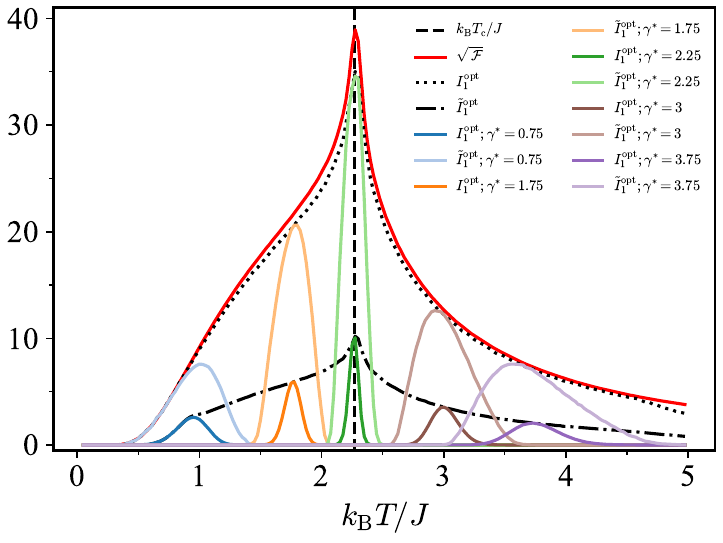}
		\caption{Results of approach 1 applied to the square-lattice ferromagnetic Ising model ($L=60$) with tuning parameter $\gamma = k_{\rm B}T/J$. The critical point $k_{\rm B}T_{\rm c}/J = 2 / \ln(1+\sqrt{2})$ is highlighted by a vertical black-dashed line. All shown quantities are lower bounds to the square root of the system's Fisher information, here corresponding to $CJ^2/k_{\rm B}^3T^2$. The indicators $I_{1}^{\rm opt}$ and $\tilde{I}_{1}^{\rm opt}$  are computed for bipartitions of the parameter range $\Gamma = \{0.025,0.1,\dots,5.0 \}$ into two regions $\Gamma_{0} =\{0.025,\dots,\gamma^{*} \}$ and $\Gamma_{1} = \{ \gamma^{*}+0.025,\dots,5.0 \}$. For both $I_{1}^{\rm opt}$ and $\tilde{I}_{1}^{\rm opt}$, the five bold lines correspond to five distinct $\gamma^{*} \in \{ 0.75,1.75,2.25,3,3.75 \}$ that determine the location of the bipartition. Note that the bipartition at $2.25$ almost coincides with the critical point located at $\approx 2.27$. The envelopes of the two indicator curves across all possible biparitions, i.e., all possible choices of $\gamma^{*}$, are shown by dashed and dotted lines. The set $\Gamma$ is composed of a uniform grid with 200 points ranging from $\gamma = 0.025$ to $\gamma = 5$ (grid spacing $\delta \gamma = 0.025$). Each dataset $D_{\gamma}$ consists of $10^6$ spin configurations.}
		\label{fig:SI_1}
\end{figure}

\indent Let us discuss a natural scenario in which the optimal predictive model $\hat{y}^{\rm opt}$ with respect to the cross-entropy loss [Eq. (5) in the main text] is strongly correlated with the score $\frac{\partial \log P(\cdot|\gamma)}{\partial \gamma}$. For this, let $\Gamma_0 = \lbrace \gamma_0 \rbrace$ and $\Gamma_{1} = \lbrace \gamma_1 \rbrace$ be singletons such that $\gamma_0 < \gamma_1$. In this case, $\hat{y}^{\rm opt}$ admits a closed-form expression~\cite{arnold:2022,arnold:2023}
\begin{align*}
    \hat{y}^{\rm opt}(\bm{x}) = 1-\frac{P(\bm{x}|\gamma_0)}{P(\bm{x}|\gamma_0) + P(\bm{x}|\gamma_1)}.
\end{align*}
When $\gamma_0$ and $\gamma_1$ are far away from each other, there is, at first glance, no reason to expect that $\hat{y}^{\rm opt}$ will be well correlated with $\frac{\partial \log P(\cdot|\gamma)}{\partial \gamma}$ for any $\gamma \in [\gamma_0, \gamma_1]$. However, if we consider $\gamma_0$ and $\gamma_1$ to be close, i.e., $\gamma_1 = \gamma_0 + \delta \gamma$, and assume that the distributions evolve sufficiently smoothly with $\gamma$ so that $\left.\frac{\partial P(\cdot|\gamma)}{\partial \gamma}\right|_{\gamma = \gamma_{0}} \delta \gamma$ is a valid first-order approximation for $\delta P = P(\cdot |\gamma_1) - P(\cdot | \gamma_0)$, we can note that 
\begin{align*}
    \hat{y}^{\rm opt}(\bm{x}) = 1-\frac{1}{2} \frac{1}{1 + \frac{\delta P(\bm{x})}{2P(\bm{x}|\gamma_0)}} = \frac{1}{2}  + \frac{1}{4} \frac{\delta P(\bm{x})}{P(\bm{x}|\gamma_0)} + \mathcal{O}\left(\delta P(\bm{x})^2  \right) =  \frac{1}{2} + \frac{1}{4}\left.\frac{\partial \log P(\bm{x}|\gamma)}{ \partial \gamma}\right|_{\gamma = \gamma_0} \delta \gamma  + \mathcal{O}\left(\delta \gamma^2\right).
\end{align*}
Thus, in the limit of small $\delta \gamma$, $\hat{y}$ is indeed strongly correlated with $\left. \frac{\partial \log P(\cdot |\gamma)}{\partial \gamma}\right|_{\gamma_0}$ and so $I_1(\gamma_0)/\delta \gamma$ is a first-order accurate, constant factor approximation for $\mathcal{F}(\gamma_0)$, $I_1(\gamma_0)/\delta \gamma \to \frac{1}{4} \mathcal{F}(\gamma_0)$ as $\delta \gamma \to 0$ (note that $I_1$ depends implicitly on $\delta \gamma$). This scenario is similar to the one of approach 2 described in the main text ($l=1$) and suggests a new scheme for detecting phase transitions by approximating the Fisher information in a data-driven manner:  
for each $\gamma \in \Gamma$ where $\Gamma$ is composed of a uniform grid with spacing $\delta \gamma$, let $\gamma_{0} = \gamma$ and $\gamma_{1} = \gamma + \delta \gamma$ and obtain a point-wise estimate of the Fisher information as $4 I_{1}(\gamma)/\delta \gamma$.

\subsection{Approach 2: Neural network-based results and alternative bound}
In the main text, we proved that the indicator of approach 2 arising from any predictive model serves as a lower bound to the indicator of the corresponding Bayes-optimal predictive model. Figure~\ref{fig:3}(a) shows the indicator of approach 2 for a neural network-based (NN-based) predictive model at various stages of training. The NN-based indicator quickly approaches the optimal indicator, highlighting that good approximations to the Fisher information may also be achieved in practice via explicit data-driven training.\\

It has been known that the loss function of approach 2 also carries information about the underlying phase transition and may serve as an alternative indicator function~\cite{arnold:2022}. In the following, we will put this intuition on firm footing:\\ 

In the infinite data limit, the loss function of approach 2 [Eq. (5) in the main text] becomes 
\begin{equation}
    \mathcal{L}_{\rm 2} = - \frac{1}{2} \left(\mathbb{E}_{\bm{x} \sim P_{0}}\left[\ln \left( 1-\hat{y}_{\bm{\theta}}(\bm{x})\right)\right]  + \mathbb{E}_{\bm{x} \sim P_{1}}\left[\ln \left( \hat{y}_{\bm{\theta}}(\bm{x})\right)\right]  \right)
\end{equation}
where $P_{y} = \frac{1}{|\Gamma_{y}|} \sum_{\gamma' \in \Gamma_{y}} P(\cdot|\gamma'), \; y \in \{ 0,1\}$. By definition $\mathcal{L}_{2} \geq \mathcal{L}_{2}^{\rm opt}$ where $\mathcal{L}_{2}^{\rm opt}$ is the global minimum of the loss function attained by the Bayes-optimal strategy. In approach 2, the optimal model makes the following predictions $\hat{y}^{\rm opt}(\bm{x}) = \frac{P_{1}(\bm{x})}{P_{0}(\bm{x}) + P_{1}(\bm{x})}$ (see Sec.~\ref{sec:HT}). Thus, 
\begin{equation}
    \mathcal{L}_{2}^{\rm opt} = - \frac{1}{2} \left(\mathbb{E}_{\bm{x} \sim P_{0}}\left[\ln \left( \frac{P_{0}(\bm{x})}{P_{0}(\bm{x}) + P_{1}(\bm{x})}\right)\right]  + \mathbb{E}_{\bm{x} \sim P_{1}}\left[\ln \left( \frac{P_{1}(\bm{x})}{P_{0}(\bm{x}) + P_{1}(\bm{x})} \right)\right]  \right) = -{\rm JS}\left[P_{0},P_{1}\right] + \ln(2).
\end{equation}
Hence, ${\rm JS}\left[P_{0},P_{1}\right] = \ln(2) - \mathcal{L}_{2}^{\rm opt} \geq \ln(2) - \mathcal{L}_{2}$. Next, we choose $l=1$ and the two sets of points $\Gamma_{0}$ and $\Gamma_{1}$ such that $P_{0}$ and $P_{1}$ correspond to probability distributions separated by a small distance $\delta \gamma$ in parameter space. Expanding the Jensen-Shannon divergence to lowest order according to $D_{f}[p_{\bm{\gamma}},p_{\bm{\gamma}+\bm{\delta \gamma}}] = \frac{f''(1)}{2}\bm{\delta \gamma}^T \mathcal{F}(\bm{\gamma})\bm{\delta \gamma} + \mathcal{O}(\delta \gamma^3)$ with $f''(1) = 1/4$ (note that $f'(1)=0$), we have ${\rm JS}\left[P_{0},P_{1}\right] = \frac{\delta \gamma^2}{8}\mathcal{F}(\gamma) + \mathcal{O}(\delta \gamma^3)$. Together with the above bound, this yields 
\begin{equation}\label{eq:SI_L_bound}
    8(\ln(2) - \mathcal{L}_{2})/\delta \gamma^2 \leq 8(\ln(2) - \mathcal{L}_{2}^{\rm opt})/\delta \gamma^2 = \mathcal{F}(\gamma) + \mathcal{O}(\delta \gamma).
\end{equation}
That is, an affine transformation of the loss value serves as a lower bound to the Fisher information in the limit $\delta \gamma \rightarrow 0$.\\

Note that this bound is based on relating the optimal loss value to the Jensen-Shannon divergence [Eq.~\eqref{eq:JSD}] which is an $f$-divergence. An expansion in lowest order thus yields a tie to the Fisher information (recall the discussion in Sec.~\ref{sec:inf_theory}). As such, this corresponds to another data-driven scheme for estimating the Fisher information based on approximating an $f$-divergence. The results for the Ising model as shown in Fig.~\ref{fig:3}(b).

\begin{figure}[tbh!]
	\centering
		\includegraphics[width=0.9\linewidth]{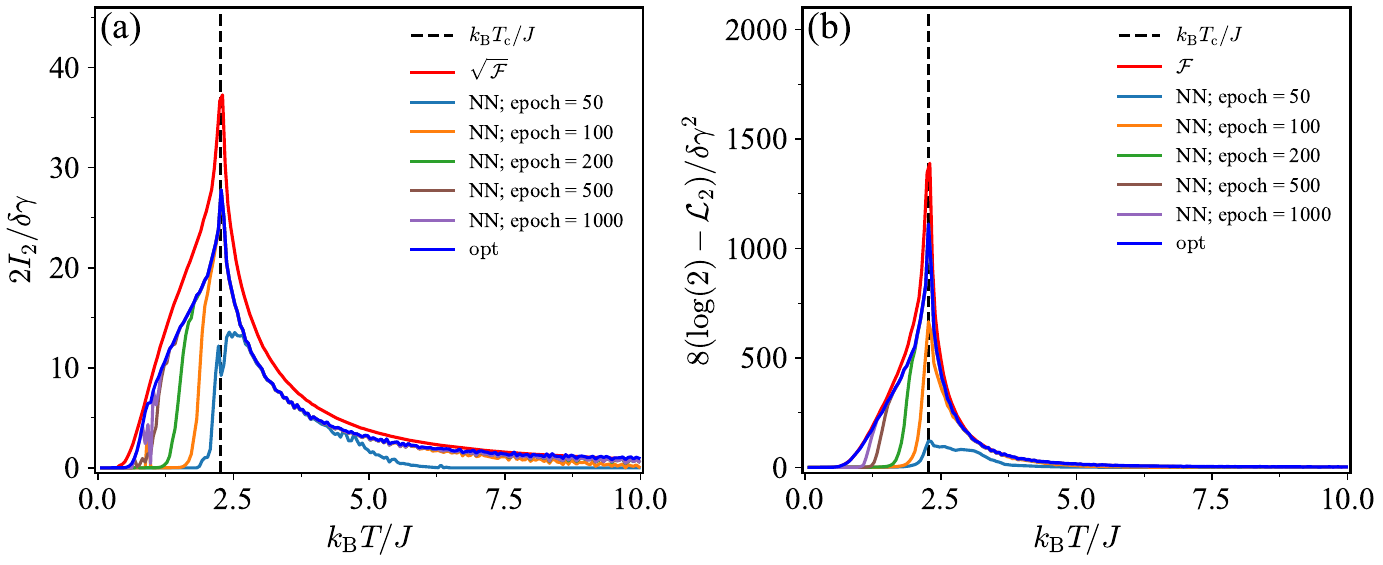}
		\caption{Results of approach 2 applied to the square-lattice ferromagnetic Ising model ($L=60$) with tuning parameter $\gamma = k_{\rm B}T/J$. The critical point $k_{\rm B}T_{\rm c}/J = 2/ \ln(1+\sqrt{2})$ is highlighted by a vertical black-dashed line. The results of an NN-based classifier for various training epochs are shown in green. Indicators corresponding to the (approximate) Bayes-optimal classifier are shown in blue~\cite{arnold:2022,arnold:2023}. (a) Rescaled indicator $I_{2}$ [Eq.~(3) in the main text, $l=1$]. (b) Bound based on loss function [Eq.~\eqref{eq:SI_L_bound}]. The set $\Gamma$ is composed of a uniform grid with 200 points ranging from $\gamma = 0.05$ to $\gamma = 10$ (grid spacing $\delta \gamma = 0.05$). Each dataset $D_{\gamma}$ consists of $10^5$ spin configurations. We consider feedforward NNs (implemented using Flux~\cite{innes:2018} in \texttt{Julia}~\cite{bezanson:2012}) with three hidden layers composed of 64 nodes each, rectified linear units as activation functions, and a learning rate of $5 \times 10^{-4}$. Weights and biases are optimized via gradient descent with Adam~\cite{kingma:2014}, where the gradients are calculated using backpropagation. As an NN input, we use the energy of a sample, which corresponds to the sufficient statistic~\cite{arnold:2023}. The inputs are standardized before training.}
		\label{fig:3}
\end{figure}

\subsection{Approach 3: Bias-variance tradeoff}
The loss function of approach 3 [mean squared error in Eq. (6) in the main text] 
\begin{equation}
    \mathcal{L}_{\rm 2} = \frac{1}{|\Gamma|}\sum_{\gamma \in \Gamma} \mathbb{E}_{\bm{x} \sim P(\cdot|\gamma)}[(\hat{\gamma}(\bm{x}) - \gamma)^2],
\end{equation}
can be decomposed into variance $\sigma^2(\gamma) = \mathbb{E}_{\bm{x} \sim P(\cdot|\gamma)} \left[\left(\hat{\gamma}(\bm{x}) - \mathbb{E}_{\bm{x} \sim P(\cdot|\gamma)}(\hat{\gamma}(\bm{x})) \right)^2\right]$ and bias terms $b(\gamma) = \mathbb{E}_{\bm{x} \sim P(\cdot|\gamma)} \left[\hat{\gamma}(\bm{x}) - \gamma \right]$:
\begin{equation}
    \mathcal{L}_{\rm 2} = \frac{1}{|\Gamma|}\sum_{\gamma \in \Gamma} \sigma^2(\gamma) + b^2(\gamma) = \langle \sigma^2\rangle + \langle b^2 \rangle,
\end{equation}
given that
\begin{equation}
    \mathbb{E}_{\bm{x} \sim P(\cdot|\gamma)}\left[(\hat{\gamma}(\bm{x}) - \gamma)^2\right] = \sigma^2(\gamma) + b^2(\gamma).
\end{equation}
When working with a finite dataset, the variance and bias should be replaced by their finite sample approximations (i.e., replacing expected values with a sample mean $\mathbb{E}_{\bm{x} \sim P(\cdot|\gamma)}\left[\cdot\right] \mapsto \frac{1}{|\mathcal{D}_{\gamma}|}\sum_{\bm{x} \in \mathcal{D}_{\gamma}}\left[\cdot\right]$). 

The indicator of approach 3 
\begin{equation}\label{eq:IPBM_SI}
     I_{3}(\gamma) = \frac{\partial \hat{\gamma}(\gamma) / \partial \gamma }{\sigma(\gamma)} =  \frac{\partial b(\gamma)/\partial \gamma + 1 }{\sigma(\gamma)}, 
\end{equation}
where $\hat{\gamma}(\gamma) = \mathbb{E}_{\bm{x} \sim P(\cdot|\gamma)}\left[ \hat{\gamma}(\bm{x}) \right]$, can be viewed as a ``signal-to-noise'' ratio where the ``signal'' term $\partial \hat{\gamma}(\gamma) / \partial \gamma$ corresponds to the change in the bias of the estimator and the ``noise'' term corresponds to the standard deviation of the estimator. Both have independently been used as indicators of phase transitions. In particular, it has been noted that the change in the bias of the estimator alone is a reliable indicator early on during training of NN-based predictive models~\cite{schaefer:2019,arnold:2022}. However, as the capacity of the predictive models increases, the change in their bias can show additional, erroneous peaks that do not correspond to critical points~\cite{schaefer:2019,arnold:2022}. This is illustrated in Fig.~\ref{fig:4}(b) in the case of the Ising model. These erroneous peaks can be removed by dividing by the standard deviation~\cite{arnold:2023}. Due to the bias-variance tradeoff~\cite{friedman:2001}, during NN training, the bias contribution to the loss typically decreases while the variance contribution increases. This is shown for the Ising model in Fig.~\ref{fig:4}(d). If the decrease of the bias precedes the increase in variance, it is expected that the change in the bias alone can constitute a reliable indicator early on during training. In contrast, the standard deviation is expected to be reliable only at later stages, see Fig.~\ref{fig:4}(d). Their ratio, the indicator proposed in Eq.~\eqref{eq:IPBM_SI}, yields a reliable signal throughout both stages of training, see Fig.~\ref{fig:4}(a). In particular, their ratio yields a fairly good lower bound to the square root of the Fisher information even at the early stages of training.

\begin{figure}[tbh!]
	\centering
		\includegraphics[width=0.99\linewidth]{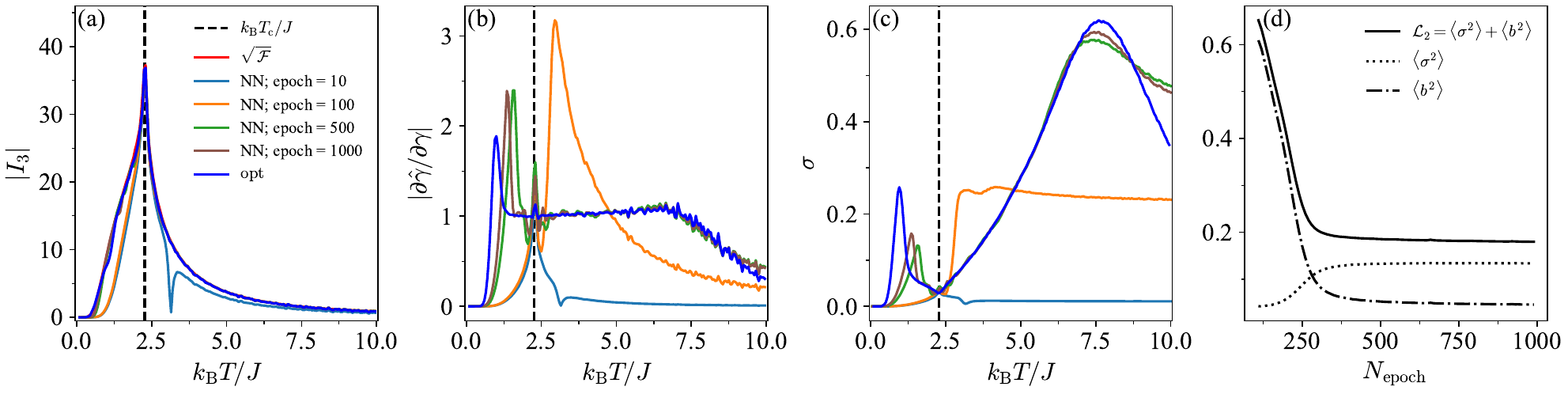}
		\caption{Results of approach 3 applied to the square-lattice ferromagnetic Ising model ($L=60$) with tuning parameter $\gamma = k_{\rm B}T/J$. The critical point $k_{\rm B}T_{\rm c}/J = 2/ \ln(1+\sqrt{2})$ is highlighted by a vertical black-dashed line. (a) The optimal indicator (blue) as well as NN-based indicators (green) are lower bounds to the square root of the system's Fisher information (red). (b) Numerator of the indicator $I_{3}$ [Eq.~(4) in the main text] in optimal case (blue) and NN-based case after various training epochs (green). (c) Denominator of indicator $I_{3}$ [Eq.~(4) in the main text] in optimal case (blue) and NN-based case after various training epochs (green). (d) Loss function of NN-based predictive model and its bias-variance decomposition as a function of the number of training epochs. The set $\Gamma$ is composed of a uniform grid with 200 points ranging from $\gamma = 0.05$ to $\gamma = 10$ (grid spacing $\delta \gamma = 0.05$). Each dataset $D_{\gamma}$ consists of $10^5$ spin configurations. We consider feedforward NNs (implemented using Flux~\cite{innes:2018} in \texttt{Julia}~\cite{bezanson:2012}) with three hidden layers with 64 nodes each, rectified linear units as activation functions, and a learning rate of $10^{-3}$. Weights and biases are optimized via gradient descent with Adam~\cite{kingma:2014}, where the gradients are calculated using backpropagation. As an NN input, we use the energy of a sample, which corresponds to the sufficient statistic~\cite{arnold:2023}. The inputs are standardized before training.}
		\label{fig:4}
\end{figure}

\subsection{Relation between generative adversarial network fidelity and Fisher information}
In this section, we will show that the generative adversarial network (GAN) fidelity which has been proposed in Ref.~\cite{singh:2021} as an indicator of phase transitions is also related to the square root of the system's Fisher information. The GAN fidelity is defined as
\begin{equation}
    F_{\rm GAN}(\gamma) = \frac{1}{\delta \gamma} \mathbb{E}_{z \sim p}\left[D\left(G(z|\gamma),\gamma\right) - D\left(G(z|\gamma),\gamma + \delta \gamma\right)\right].
\end{equation}
The function $D(\bm{x},\gamma)$ is a \emph{discriminator} trained to output 1 for samples $\bm{x}$ drawn from the probability distribution $P(\cdot|\gamma)$ and 0 otherwise, whereas the function $G(\cdot|\gamma): \mathcal{Z} \rightarrow \mathcal{X}$ is a \emph{generator} trained to produce samples from $P(\cdot|\gamma)$ when evaluated with $z \sim p$, where $p$ is a simple prior distribution over $z \in \mathcal{Z}$. We are going to analyze the ideal case in which both the generator and discriminator are optimal, i.e., implement optimal strategies for generating and discriminating samples, respectively. In this case, we have
\begin{equation}
    F_{\rm GAN}^{\rm opt}(\gamma) = \frac{1}{\delta \gamma} \mathbb{E}_{\bm{x} \sim P(\cdot|\gamma)}\left[D^{\rm opt}\left(\bm{x},\gamma\right) - D^{\rm opt}\left(\bm{x},\gamma + \delta \gamma\right)\right],
\end{equation}
with $D^{\rm opt}\left(\bm{x},\gamma\right) = P(\bm{x}|\gamma)/\mathcal{N}(\bm{x})$ where $\mathcal{N}(\bm{x}) = \sum_{\gamma \in \Gamma} P(\bm{x}|\gamma)$ is a normalization factor. Taking the limit $\delta \gamma \rightarrow 0$, we have
\begin{equation}
    F_{\rm GAN}^{\rm opt}(\gamma) = - \mathbb{E}_{\bm{x} \sim P(\cdot|\gamma)}\left[\frac{1}{ \mathcal{N}(\bm{x})}  \frac{\partial P(\bm{x}|\gamma)}{\partial \gamma}\right] \leq \sum_{\bm{x} \in \mathcal{X} } D^{\rm opt}(\bm{x},\gamma)\left| \frac{\partial P(\bm{x}|\gamma)}{\partial \gamma}  \right| \leq \sum_{\bm{x} \in \mathcal{X} } \left| \frac{\partial P(\bm{x}|\gamma)}{\partial \gamma}\right|  \leq\sqrt{\mathcal{F}} ,
\end{equation}
where we have used the fact that $0 \leq D^{\rm opt}(\bm{x},\gamma) \leq 1$ for the second inequality and the Cauchy-Schwarz inequality for the last step.
\section{Data generation}
\subsection{Ising model}
Given a fixed set of tuning parameter values, we use the Metropolis-Hastings algorithm to sample spin configurations from the corresponding Boltzmann distribution. We initialize the system in one of its two lowest energy states. The lattice is updated by drawing a random spin and flipping it. The new state is accepted with probability ${\rm min}(1, e^{-\Delta E/k_{\rm B}T})$, where $\Delta E$ is the energy difference resulting from the spin flip. After a thermalization period of $N$ lattice sweeps, we collect $N$ samples, where we set $N = 10^5$ or $N = 10^6$ as specified in the respective figure captions.

\subsection{Transverse-field Ising model}
To obtain ground states of the one-dimensional quantum transverse-field Ising model, we perform exact diagonalization using the QuSpin package~\cite{quspin:2017,quspin:2019} in \texttt{Python}. We compute the quantum Fisher information as the second derivative of the fidelity $ F(\rho,\sigma) = {\rm tr}(\sqrt
{\sqrt{\sigma}\rho\sqrt{\sigma}})$, where $F(\rho(\gamma),\rho(\gamma+\delta \gamma)) = 1 - \delta \gamma^2 \mathcal{F}^{Q}(\rho(\gamma))/8 + \mathcal{O}(\delta \gamma^3)$. We compute a finite difference approximation of the second derivative using the second-order central difference formula.

\end{document}